\newtheorem{thm}{Theorem}[section]
\newtheorem{assmp}[thm]{Assumption}
\newtheorem{exmp}[thm]{Example}
\newtheorem{lem}[thm]{Lemma}
\newtheorem{notation}[thm]{Notation}
\newtheorem{prop}[thm]{Proposition}
\theoremstyle{remark}
\newtheorem{rem}[thm]{Remark}
\begin{document}

\title{Robust market-adjusted systemic risk measures}
\author{Matteo Burzoni, Marco Frittelli, Federico Zorzi\thanks{%
email: matteo.burzoni@unimi.it, marco.frittelli@unimi.it,
federicozorzi@outlook.it}}
\maketitle

\begin{abstract}
In this note we consider a system of financial institutions and study
systemic risk measures in the presence of a financial market and in a robust
setting, namely, where no reference probability is assigned. We obtain a
dual representation for convex robust systemic risk measures adjusted to the
financial market and show its relation to some appropriate no-arbitrage
conditions.
\end{abstract}


\section{Introduction}

In a system composed of $N$ financial institutions, a traditional approach
to evaluate the risk of each institution $j\in \{1,\ldots,N\}$ is to apply a 
\textit{univariate monetary risk measure} $\eta ^{j}$ to the single
financial position $X^{j}$. Once the risk $\eta ^{j}(X^{j})$ of each
institution has been determined, a naive assessment of the risk of the
entire system $X=(X^{1},\dots ,X^{N})$ could be given as the sum of the
individual risks. However, such a procedure would probably not capture the
risk of complex systems and the urge for more satisfactory measures of
systemic risk originated, in the recent years, a vast literature. \cite{ChenIyengarMoallemi} and \cite{Kromer} studied under which conditions a
systemic risk measure $\rho$ could be written in the form 
\begin{equation}
\rho (X)=\eta (\Lambda (X))=\inf \{m\in \mathbb{R}\mid \Lambda (X)+m\in 
\mathcal{A}\},  \label{DynRM2}
\end{equation}%
for some univariate monetary risk measure $\eta $ with acceptance set $%
\mathcal{A}$ and some aggregation rule $\Lambda :\mathbb{R}^{N}\rightarrow 
\mathbb{R}$ that transforms the $N$-dimensional risk factors into a
univariate risk factor. In this case, $\rho (X)$ is the minimal cash amount
that secures the system when it is added to the total aggregated loss $%
\Lambda (X)$. Note that in (\ref{DynRM2}) such a minimal capital is added 
\textit{after aggregating individual risks.} An alternative approach, see 
\cite{ACDP,systemic1,Feinstein}, proposes to add capital into the single
institutions \textit{before aggregating their individual risks} leading to
risk measures of the form: 
\begin{equation}
\rho (X):=\inf \left\{ \sum_{j=1}^{N}m^{j}\mid m=[m^{1},\cdots ,m^{N}]\in 
\mathbb{R}^{N},\,\Lambda (X+m)\in \mathcal{A}\right\} .  \label{DynRM3}
\end{equation}%
As one can see from \eqref{DynRM3}, the difference to \eqref{DynRM2} is that
each $m^{j}\in \mathbb{R}$ is added to the financial position $X^{j}$ of
institution $j\in \{1,\cdots ,N\}$ before the corresponding aggregated loss $%
\Lambda (X+m)$ is calculated. We refer the interested reader to \cite%
{systemic1} for more references on systemic risk measures. We point out that
in the above literature $X=(X^{1},\ldots ,X^{N})$ is a vector of random
variables defined on a probability space $(\Omega ,\mathcal{F},\mathbb{P})$
and consequently the acceptance set $\mathcal{A}$ is a subset of $%
L^{0}(\Omega ,\mathcal{F},\mathbb{P}).$

In this paper we depart from this literature in two respects. First, the
agents are allowed to operate in a financial market composed of $J+1$ assets 
$S^{0},S^{1},\dots ,S^{\text{J}}$ and finitely many trading periods $%
t=0,\dots ,T-1$. We make use of an abstract set $\mathcal{G}$ to describe
all possible positions that are achievable by self-financing trading
strategies with zero initial cost. Second, no assumptions are made on the
probabilities of future events. The sample space is a non-empty subset $%
\Omega \subseteq ( (0,+\infty )\times \mathbb{R}^{J}) ^{T} $, endowed with
the usual Euclidean metric. This approach is robust in the sense that we do
not impose a priori any statistical/historical probability measure on $%
\Omega $ but we rather work in a pointwise manner.

The financial position of the $N\in \mathbb{N}$ agents or financial
institutions is represented by $X=(X^{1},\dots ,X^{N})\in \mathcal{B}$,
where $\mathcal{B}:=\mathcal{B}(\mathbb{R}^{N})$ is the set of all Borel
measurable functions $\Omega \rightarrow \mathbb{R}^{N}$. Note that we also
assume that $\mathcal{G}$ is contained in $\mathcal{B}$, namely $\mathcal{G}$
is a set of \emph{vectors}. This allows us to model the case where the
agents cannot achieve the same class of terminal payoffs or the case where
they even trade in different markets. We are interested in the risk of the
entire system that we evaluate in terms of an aggregate univariate position.
To achieve this we consider an acceptance set $\mathcal{A}\subseteq \mathcal{%
B}(\mathbb{R})$ and an aggregation function $\Lambda :\mathbb{R}%
^{N}\rightarrow \mathbb{R}$. We then evaluate the risk of the financial
system by means of the following functional $\rho :\mathcal{B}\rightarrow
\lbrack -\infty ,+\infty ]$%
\begin{equation}
\rho (X):=\inf \left\{ \sum_{i=1}^{N}m^{i}\mid m\in \mathbb{R}^{N},\ \exists
g\in \mathcal{G}:\Lambda (m+X+g)\in \mathcal{A}\right\} ,  \label{def:sysrho}
\end{equation}%
with $\rho (X)=\infty $ if the set on the right hand side is empty. This
risk measure \eqref{def:sysrho} is \emph{market-adjusted}, meaning that
every agent is allowed to trade in the underlying market, in a
self-financing way and according to achievable payoffs, in order to obtain
an acceptable aggregate terminal position. We observe that this measure of
risk, that we label of the type \emph{first allocate and adjust, then
aggregate}, is in the same spirit of the risk measures (\ref{DynRM3}), even
though it has the additional market adjustment feature and is specified in a
robust framework.

\textit{Our aim is to prove a dual representation for the systemic risk
measure \eqref{def:sysrho} and to understand its interplay with possible
notions of arbitrage (see Theorem \ref{main:thm:2}).} 

To develop this theory, we will follow the same approach that \cite%
{Che:Kup:Tan} adopted for the analysis of the robust pricing-hedging duality
in one dimension and extend it to the present multivariate (systemic)
setting. We address this problem and make precise statements in Section \ref%
{sec:main}. We refer the interested reader to \cite{Che:Kup:Tan} for more
references on robustness in a non systemic framework.

An alternative way of measuring market-adjusted systemic risk employs the
use of a second aggregation function $\Gamma :\mathbb{R}^{N}\rightarrow 
\mathbb{R}$ for the payoffs of the trading strategies, i.e., by means of the
following functional $\rho_\Gamma :\mathcal{B}\rightarrow \lbrack -\infty
,+\infty ] $ 
\begin{equation}
\rho_\Gamma (X):=\inf \left\{ \sum_{i=1}^{N}m^{i}\mid m\in \mathbb{R}^{N},\
\exists g\in \mathcal{G}:\Lambda (m+X)+\Gamma (g)\in \mathcal{A}\right\} .
\label{def:sysrho2}
\end{equation}%
The interpretation is similar to the one above but it is different in
spirit. In \eqref{def:sysrho} the agents are operating as $N$ different
units both in terms of the financial position $X$ and market payoff $g$. In %
\eqref{def:sysrho2} the acceptability of the aggregate position $\Lambda
(m+X)$ can be influenced by an aggregate market payoff $\Gamma (g)$. We can
think of $\eqref{def:sysrho2}$ as the risk metric of a single firm composed
of $N$ different units and a trading desk operating independently of the $N$
units\footnote{%
For ease of notation we continue to assume that $g$ and $X$ have the same
dimension but, in principle, they could now be different.}. The total risk
of the firm is assessed by aggregating the static financial position of the
firm and the market position separately.

From a mathematical point of view there is very little difference in
treating the two cases and we present analogous results in Section~\ref%
{sec:Gamma}.

\section{The Main Results}

\label{sec:main} 

\begin{notation}
We let $\boldsymbol{1}:=(1,\dots ,1)$ be the $N$-dimensional vector with
entries all equal to $1$, so that, if $x$ is a univariate variable, $\mathbf{%
x}:=x\boldsymbol{1}=(x,\dots ,x)$ is the $N$-dimensional vector with all
components equals to $x$. When comparing multivariate positions, all the
inequalities are to be intended componentwise, in particular, $\mathcal{B}%
^{+\!}=\mathcal{B}^{+}(\mathbb{R}^{N})$ is the set of functions in $\mathcal{%
B}=\mathcal{B}(\mathbb{R}^{N})$ with values in $[0,+\infty )^N$. A set $%
\mathcal{A\subseteq B}(\mathbb{R})$ is called monotone if $x\geq y\in 
\mathcal{A}\Rightarrow x\in \mathcal{A}$.
\end{notation}

Unless otherwise specified, in the remainder of the paper the following
assumption holds true.

\begin{assmp}
\label{ass:standing}

\begin{enumerate}
\item $\mathcal{A\subseteq B}(\mathbb{R})$ is monotone and $0\in \mathcal{A}$%
; $\mathcal{G\subseteq B}$ with $\mathbf{0}\in \mathcal{G}$.

\item The aggregation function $\Lambda :\mathbb{R}^{N}\rightarrow \mathbb{R}
$ is increasing, with respect to the componentwise order, concave and $%
\Lambda (\mathbf{0})=0$.

\item The set $\Lambda ^{-1}(\mathcal{A})-\mathcal{G}=\left\{ X\in \mathcal{B%
}\mid \Lambda (X+g)\in \mathcal{A}\text{ for some }g\in \mathcal{G}\right\} $
is convex.
\end{enumerate}
\end{assmp}

The monotonicity of the acceptance set $\mathcal{A}$ is standard in the
context of univariate risk measures and the conditions on $\Lambda$ are also
typical in the theory of multivariate risk measures. Notice that the
aggregation function is not required to be strictly increasing nor strictly
concave. Given the first two, the third condition holds when both $\mathcal{A%
}$ and $\mathcal{G}$ are convex.

\begin{exmp}
\label{agg:fun:exmp}Consider concave increasing functions $u,u_{i}:\mathbb{R}%
\rightarrow \mathbb{R}$ satisfying $u(0)=u_{i}(0)=0 ~\forall i \in \{1,
\dots, N\}$. The aggregation functions 
\begin{eqnarray}
\Lambda (x) &:&=\alpha u\bigg(\sum_{i=1}^{N}x^{i}\bigg)+\sum_{i=1}^{N}\alpha
_{i}u_{i}(x^{i}),\text{ for }\alpha ,\alpha _{i}\geq 0 ~\forall i \in \{1,
\dots, N\},  \label{utility} \\
\Lambda (x) &:&=-\sum_{i=1}^{N}\alpha _{i}(x^{i})^{-},\text{ for }\alpha
_{i}\geq 0 ~\forall i \in \{1, \dots, N\} ,  \label{negative utility}\\
\Lambda (x)&:&=\max_{y\in \mathbb{R}_{-}^{N},b\in \mathbb{R}_{-}^{N}\ :\
		x_{i}\geq b_{i}+y_{i}-\sum_{j=1}^{N}\Pi _{ji}y_{j}}\left\{
	\sum_{i=1}^{N}y_{i}+\gamma \sum_{i=1}^{N}b_{i}\right\},\text{ for }\gamma>1 \label{network}
\end{eqnarray}%
satisfy Assumption \ref{ass:standing}. The function in (\ref{utility}) has
been frequently used in the literature on systemic risk measures with either 
$\alpha=0$ or $\alpha_i=0 ~\forall i \in \{1, \dots, N\}$, see e.g.\ \cite%
{ACDP,systemic1}. The function in (\ref{negative utility}) corresponds to
considering the aggregate position as the sum of the debts of the single
units, if $\alpha _{i}=1 ~\forall i \in \{1, \dots, N\}$. Finally, the function in (\ref{network}) is derived from a network model where $\Pi _{ji}$ is the fraction of the total debt of firm $j$ owed to firm 
$i$ and $\gamma >1$ is a parameter balancing the trade off between capital
injection and reduction of mutual debts, see \cite{ChenIyengarMoallemi} for more details and examples.
\end{exmp}

We now introduce the functional analytic setting that allows us to prove a
robust dual representation for $\rho $. Let $Z:\Omega \rightarrow \lbrack
1,+\infty )$ be a continuous function with compact sublevel sets $\left\{
\omega \in \Omega :Z(\omega )\leq z\right\} $ for all $z\in \mathbb{R}$. Let 
$\mathcal{B}_{Z}$ be the set of functions $X=(X^{1},\dots ,X^{N})\in 
\mathcal{B}$ such that $\frac{X^{i}}{Z}$ is bounded for all $i=1,\dots ,N$.
The set of continuous functions in $\mathcal{B}_{Z}$ is called $C_{Z}$,
while $U_{Z}$ is the set of upper semicontinuous functions in $\mathcal{B}%
_{Z}$. Their univariate counterparts are $\mathcal{B}_{Z}(\mathbb{R}),C_{Z}(%
\mathbb{R}),U_{Z}(\mathbb{R})$. We let $ca_{Z}$ be the space of $N$%
-dimensional vectors of Borel measures $\mu =(\mu ^{1},\ldots ,\mu ^{N})$
such that $\int_{\Omega }{Zd\mu ^{i}}<+\infty $, for every $i=1,\ldots ,N$.
We finally form a dual pair $\left( \mathcal{B}_{Z},ca_{Z},\left\langle
~,~\right\rangle \right) $ with 
\begin{equation*}
\left\langle X,\mu \right\rangle :=\sum_{i=1}^{N}\int_{\Omega }{X^{i}d\mu
^{i}},\qquad X\in \mathcal{B}_{Z},\ \mu \in ca_{Z}.
\end{equation*}%
We let $ca_{Z}^{+}$ be the positive cone in $ca_{Z}$ and observe that $%
ca_{Z}^{+}$ contains the subset $\mathcal{P}_{Z}$ of $N$-dimensional vectors
of probability measures $\mathbb{P}=(\mathbb{P}^{1},\dots ,\mathbb{P}^{N})$
such that $\mathbb{E}^{\mathbb{P}^{i}}Z<+\infty $ for $i=1\dots ,N$. For a
functional $f$ on $C_{Z}$ we define 
\begin{equation}
f^{\ast }(\mu ):=\sup_{X\in C_{Z}}\left\{ \left\langle X,\mu \right\rangle
-f(X)\right\} ,\qquad \mu \in ca_{Z},  \label{convex conjugate}
\end{equation}%
which is the convex conjugate of $f$ with respect to the dual system $%
(C_{Z},ca_{Z}).$ 

In Theorem \ref{main:thm:2} below, we prove that $\rho $ admits a dual
representation if and only if a certain no arbitrage condition holds. The
theorem holds under the following assumption, which is essentially requiring
that the set of achievable market payoffs $\mathcal{G}$ is rich enough.

\begin{assmp}
\label{ass:Grich} There exists $\gamma \leq 0$ such that 
\begin{equation}
\forall n\in \mathbb{N},\ \exists z\in \lbrack 0,+\infty ),\text{ }\exists
g\in \mathcal{G}\text{ such that }\Lambda \left( \left[ \frac{\gamma }{N}+%
\frac{1}{n}-n(Z-z)^{+}\right] \mathbf{1}+g\right) \in \mathcal{A}. 
\tag*{(A)}  \label{ip:1:main:thm:2}
\end{equation}
\end{assmp}

This condition is a multivariate version of condition (2.1) in \cite%
{Che:Kup:Tan}; it is satisfied for example when $\Omega$ is compact or when,
in the market described by $\mathcal{G}$, options which are sufficiently
out-of-the-money are available at a sufficiently small price. We give some
other sufficient conditions in Proposition~\ref{suff:cond:ip:1:main:thm:2}
below. By definition of $\rho $, condition \ref{ip:1:main:thm:2} implies
that for all $n\in \mathbb{N}$ there exists $z_{n}\in \mathbb{R_{+}}$ such
that $\rho (-n(Z-z_{n})^{+}\mathbf{1})\leq \frac{N}{n}+\gamma $.

\begin{thm}
\label{main:thm:2} Under Assumption \ref{ass:Grich} the following are
equivalent:

\begin{enumerate}
\item \label{N:no:arb:cond} $m\in \mathbb{R}^{N},$ $\sum m^{i}<\gamma
\Rightarrow \nexists g\in \mathcal{G}:\Lambda (m+g)\in \mathcal{A}.$

\item \label{N:gen:mart:cond} There exists $\mathbb{Q}=(\mathbb{Q}%
_{1},\ldots,\mathbb{Q}_{N})\in \mathcal{P}_{Z}$ such that $\sum_{i=1}^{N}%
\mathbb{E}^{\mathbb{Q}_{i}}[X^{i}]-\gamma \geq 0$ for all $X\in C_{Z}$
satisfying $\Lambda (X+g)\in \mathcal{A}$ for some $g\in \mathcal{G}$.

\item \label{N:dual:repr:phi} $\rho $ is real valued on $\mathcal{B}_{Z},$ $%
\rho (0)=\gamma $ and 
\begin{equation*}
\rho (X)=\max_{\mathbb{Q}=(\mathbb{Q}_{1},\ldots ,\mathbb{Q}_{N})\in 
\mathcal{P}_{Z}}\left\{ \sum_{i=1}^{N}\mathbb{E}^{\mathbb{Q}%
_{i}}[-X^{i}]-\rho ^{\ast }(-\mathbb{Q})\right\} \text{,\quad }X\in C_{Z}.
\end{equation*}%
%
%
%
%
%
\end{enumerate}

If in addition one has 
\begin{equation}
\rho (X)=\inf_{Y\in C_{Z},Y\leq X}{\rho (Y)}\text{ for all }X\in U_{Z},
\label{ip:2:main:thm:2}
\end{equation}%
then \ref{N:no:arb:cond}--\ref{N:dual:repr:phi} are also equivalent to each
one of the following two conditions:

\begin{enumerate}
\setcounter{enumi}{3}
\item \label{N:gen:mart:cond:2}There exists $\mathbb{Q}=(\mathbb{Q}%
_{1},\ldots,\mathbb{Q}_{N})\in \mathcal{P}_{Z}$ such that $\sum_{i=1}^{N}%
\mathbb{E}^{\mathbb{Q}_{i}}[X^{i}]-\gamma \geq 0$ for all $X\in U_{Z}$
satisfying $\Lambda (X+g)\in \mathcal{A}$ for some $g\in \mathcal{G}$.

\item \label{N:dual:repr:phi:2} $\rho $ is real valued on $\mathcal{B}_{Z},$ 
$\rho (0)=\gamma $ and 
\begin{equation*}
\rho (X)=\max_{\mathbb{Q}=(\mathbb{Q}_{1},\ldots ,\mathbb{Q}_{N})\in 
\mathcal{P}_{Z}}\left\{ \sum_{i=1}^{N}\mathbb{E}^{\mathbb{Q}%
_{i}}[-X^{i}]-\rho ^{\ast }(-\mathbb{Q})\right\} \text{,\quad }X\in U_{Z}.
\end{equation*}
\end{enumerate}
\end{thm}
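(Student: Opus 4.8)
The plan is to prove the cycle \ref{N:no:arb:cond} $\Rightarrow$ \ref{N:dual:repr:phi} $\Rightarrow$ \ref{N:gen:mart:cond} $\Rightarrow$ \ref{N:no:arb:cond}, and then to bootstrap the two remaining equivalences under \eqref{ip:2:main:thm:2}. Write $\mathcal{C}:=\Lambda^{-1}(\mathcal{A})-\mathcal{G}$, so that by \eqref{def:sysrho} one has $\rho(X)=\inf\{\sum_i m^i:m\in\mathbb{R}^N,\ X+m\in\mathcal{C}\}$. By Assumption \ref{ass:standing} the set $\mathcal{C}$ is convex, and it is monotone: if $X\in\mathcal{C}$ and $Y\ge X$ then $\Lambda(Y+g)\ge\Lambda(X+g)\in\mathcal{A}$, hence $Y\in\mathcal{C}$ since $\mathcal{A}$ is monotone. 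Consequently $\rho$ is convex, decreasing, and satisfies $\rho(X+m_0)=\rho(X)-\sum_i m_0^i$ for every $m_0\in\mathbb{R}^N$ (in particular $\rho(X+c\mathbf{1})=\rho(X)-Nc$ and $\rho(X+ce_i)=\rho(X)-c$ for the $i$-th coordinate vector $e_i$). These structural facts, together with the compactness of the sublevel sets of $Z$, are all that will be used; the aggregation rule $\Lambda$ and the vector market $\mathcal{G}$ never re-enter.

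The implications \ref{N:dual:repr:phi} $\Rightarrow$ \ref{N:gen:mart:cond} $\Rightarrow$ \ref{N:no:arb:cond} are routine. Evaluating the representation at $X=0$ gives $\gamma=\rho(0)=\max_{\mathbb{Q}}\{-\rho^{\ast}(-\mathbb{Q})\}$, so $\rho^{\ast}(-\mathbb{Q}^{\ast})=-\gamma$ for some $\mathbb{Q}^{\ast}\in\mathcal{P}_Z$; if $X\in C_Z$ satisfies $\Lambda(X+g)\in\mathcal{A}$ for some $g$ then $X\in\mathcal{C}$, hence $\rho(X)\le 0$, and the Fenchel--Young inequality $\rho^{\ast}(-\mathbb{Q}^{\ast})\ge\langle X,-\mathbb{Q}^{\ast}\rangle-\rho(X)$ gives $\sum_i\mathbb{E}^{\mathbb{Q}^{\ast}_i}[X^i]\ge\gamma$, which is \ref{N:gen:mart:cond}. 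For \ref{N:gen:mart:cond} $\Rightarrow$ \ref{N:no:arb:cond}, a constant $m$ with $\sum_i m^i<\gamma$ and $\Lambda(m+g)\in\mathcal{A}$ for some $g$ would be an element of $C_Z$ violating the inequality in \ref{N:gen:mart:cond}.

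The substance is \ref{N:no:arb:cond} $\Rightarrow$ \ref{N:dual:repr:phi}. First, $\rho(0)=\gamma$: the bound $\rho(0)\le\gamma$ follows from the consequence of Assumption \ref{ass:Grich} recorded after it, namely $\rho(-n(Z-z_n)^+\mathbf{1})\le\tfrac{N}{n}+\gamma$, together with monotonicity of $\rho$ and $n\to\infty$; and $\rho(0)<\gamma$ would give a constant $m$ with $\sum_i m^i<\gamma$ lying in $\mathcal{C}$, contradicting \ref{N:no:arb:cond}. Next, $\rho$ is real valued on $\mathcal{B}_Z$. If $X\in\mathcal{B}_Z$ with $|X^i|\le cZ$, then $-cZ\mathbf{1}\le X\le cZ\mathbf{1}$; for $n\ge c$ the pointwise inequality $-cZ\ge -n(Z-z_n)^+-cz_n$ holds, so by monotonicity and cash-additivity $\rho(X)\le\rho(-cZ\mathbf{1})\le\rho(-n(Z-z_n)^+\mathbf{1})+Ncz_n<\infty$. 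Finiteness from below is the delicate point, and it is here that convexity of $\mathcal{C}$ and \ref{N:no:arb:cond} interact: were $\rho(X)=-\infty$ for some such $X$, then $\rho(cZ\mathbf{1})=-\infty$, so there are $m_k$ with $\sum_i m_k^i<-k$ and $cZ\mathbf{1}+m_k\in\mathcal{C}$; fixing $\lambda\in(0,1)$ and $n>\lambda c/(1-\lambda)$ and taking the convex combination of $cZ\mathbf{1}+m_k$ with the element $[\tfrac{\gamma}{N}+\tfrac1n-n(Z-z_n)^+]\mathbf{1}\in\mathcal{C}$ supplied by Assumption \ref{ass:Grich}, one checks that the resulting $Z$-dependent part is dominated by the constant $\lambda c z_n$, so monotonicity of $\mathcal{C}$ yields a \emph{constant} vector $w_k\in\mathcal{C}$ with $\sum_i w_k^i\to-\infty$, contradicting \ref{N:no:arb:cond} for $k$ large.

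It remains to obtain the representation on $C_Z$. Now $\rho|_{C_Z}$ is finite, convex, decreasing and vector cash-additive, and I would follow \cite{Che:Kup:Tan}: using the compact sublevel sets of $Z$ and Assumption \ref{ass:Grich} (which furnishes the tightness controlling the tails in the weight $Z$), show that $\rho$ is $\sigma(C_Z,ca_Z)$-lower semicontinuous, hence equals its biconjugate $\rho(X)=\sup_{\mu\in ca_Z}\{\langle X,\mu\rangle-\rho^{\ast}(\mu)\}$ with the supremum attained. Monotonicity of $\rho$ forces $\rho^{\ast}(\mu)=+\infty$ unless $-\mu\in ca_Z^{+}$ (test against $tY$, $Y\in C_Z$, $Y\ge 0$, $t\to\infty$, using $\rho(tY)\le\rho(0)=\gamma$), and $\rho(X+ce_i)=\rho(X)-c$ forces $\mu^i(\Omega)=-1$ for each $i$; thus $-\mu$ ranges precisely over $\mathcal{P}_Z$, and writing $\mu=-\mathbb{Q}$ gives the formula in \ref{N:dual:repr:phi} with the maximum attained. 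I expect this biconjugation/compactness step --- transplanting the one-dimensional argument of \cite{Che:Kup:Tan} to the present vector-valued setting --- to be the main obstacle. Finally, under \eqref{ip:2:main:thm:2} the equivalence with \ref{N:gen:mart:cond:2}--\ref{N:dual:repr:phi:2} follows: \ref{N:dual:repr:phi:2}$\Rightarrow$\ref{N:dual:repr:phi} and \ref{N:gen:mart:cond:2}$\Rightarrow$\ref{N:gen:mart:cond} by restriction ($C_Z\subseteq U_Z$); conversely \eqref{ip:2:main:thm:2} and monotonicity show that $\rho^{\ast}$ computed over $U_Z$ agrees with $\rho^{\ast}$ computed over $C_Z$ (for $X\in U_Z$ write $\rho(X)=\inf_{Y\in C_Z,\,Y\le X}\rho(Y)$ and use $\langle X,-\mathbb{Q}\rangle\le\langle Y,-\mathbb{Q}\rangle$), whence repeating the biconjugation argument over $U_Z$ gives \ref{N:dual:repr:phi:2}, and \ref{N:dual:repr:phi:2}$\Rightarrow$\ref{N:gen:mart:cond:2} exactly as \ref{N:dual:repr:phi}$\Rightarrow$\ref{N:gen:mart:cond}.
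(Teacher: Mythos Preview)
Your proposal is correct and follows essentially the same route as the paper: the same cycle \ref{N:no:arb:cond}$\Rightarrow$\ref{N:dual:repr:phi}$\Rightarrow$\ref{N:gen:mart:cond}$\Rightarrow$\ref{N:no:arb:cond}, the same reduction of the substantive step \ref{N:no:arb:cond}$\Rightarrow$\ref{N:dual:repr:phi} to a multivariate Cheridito--Kupper--Tangpi duality (which the paper packages separately as Theorem~\ref{dual:repr:thm}, proved via Hahn--Banach and Daniell--Stone rather than literal lower semicontinuity, and which is what delivers the \emph{max} rather than just a sup), and the same appeal to its $U_Z$ counterpart (Theorem~\ref{dual:repr:thm:U}) under \eqref{ip:2:main:thm:2}. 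The only cosmetic difference is that for finiteness of $\rho$ from below on $\mathcal{B}_Z$ the paper uses the one-line convexity estimate $\rho(Y)\ge 2\rho(0)-\rho(-Y)$, whereas your contradiction-and-convex-combination argument reaches the same conclusion by a longer path.
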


Before giving the proof of Theorem \ref{main:thm:2}, whose technical parts
are postponed to Section~\ref{App:analytical}, we comment on its statement.
Recall that $\gamma \leq 0$ is given and fixed. 

Condition \ref{N:no:arb:cond} excludes a situation that we call a \emph{%
regulatory arbitrage opportunity}, namely a situation where it is possible
to make an initial position $m\in \mathbb{R}^{N}$ such that $%
\sum_{i=1}^{N}m^{i}<\gamma $ acceptable by simply adding an achievable payoff $g\in 
\mathcal{G}$ that is obtained by trading at zero cost in the financial
market. In particular, absence of regulatory arbitrage opportunities implies that
none of the $N$ agents can achieve a \emph{(model independent) market
arbitrage opportunity}, namely, that the set of achievable payoffs $\mathcal{%
G}$ cannot contain elements of the form $(0,\cdots ,g^{i},\cdots ,0)$ with $%
g^{i}(\omega )\geq \varepsilon >0$ for every $\omega \in \Omega $.

Condition \ref{N:gen:mart:cond} provides information regarding the existence
of an evaluation measure $\mathbb{Q}$. If it is possible to make a position $%
X\in C_{Z}$ acceptable by adding an achievable payoff $g\in \mathcal{G}$,
then the evaluation of $X$ given by $\mathbb{Q}$ (and adjusted by $\gamma $)
must be non negative. In particular, for the case $\gamma =0$ such an
evaluation of $X$ must be non-negative without adjustments. When $\mathcal{G}
$ is a linear space of functions in $C_{Z}$, we can write $\Lambda
(kg+(-kg))=\Lambda (0)=0\in \mathcal{A}$ for any $k\in \mathbb{R}$ and $g\in 
\mathcal{G}\cap C_{Z}$. Then, condition \ref{N:gen:mart:cond} implies that 
\begin{equation*}
\sum \mathbb{E}^{\mathbb{Q}^{i}}[g^{i}]=0\text{ for all }g\in \mathcal{G}%
\cap C_{Z}.
\end{equation*}%
Using the terminology of \cite{systemic2}, the probability vector $\mathbb{Q}
$ is called \emph{fair}. If, in addition, $\mathcal{G}$ contains vectors
with only one non-zero components then $\mathbb{Q}$ is a vector of
martingale measures, i.e., $\mathbb{E}_{\mathbb{Q}^{i}}[g^{i}]=0$ for every $%
(0,\cdots ,g^{i},\cdots ,0)\in \mathcal{G}$.

Finally, condition \ref{N:dual:repr:phi} is the usual dual representation of
the Fenchel-Moreau type. Following the original interpretation of \cite{Artzner:Coherent} each $\mathbb{Q}\in \mathcal{P}_{Z}$ is a plausible model
for the risk $X$ and it is called a \emph{generalized scenario}. The term $%
\rho ^{\ast }(-\mathbb{Q})$ has the role to penalize scenarios which are
less plausible, possibly by an infinite amount. The value $\rho (X)$ is then
the worst-case expectation across all plausible scenarios, suitably
penalized. We refer to \cite[Chapter 4]{FS2016} and \cite[Chapter 8]{McNeil}, for more details on the relevance of the dual representation for the
univariate case and to \cite{ararat:birgit} and \cite{systemic:arduca:2021} for a comprehensive
study of dual representations for systemic risk measures in the
non-robust setting. 
\begin{rem}\label{rem:FriSca}
	Using an argument similar to \cite[Proposition 3.9 and 3.11]{FriSca06}, the dual representation in item \ref{N:dual:repr:phi} could be reformulated as%
	\begin{equation*}
		\rho (X)=\max_{\mathbb{Q}=(\mathbb{Q}_{1},\ldots ,\mathbb{Q}_{N})\in 
			\mathcal{P}_{Z}\cap Bar_{\Lambda ^{-1}(\mathcal{A)}}\cap (-Bar_{\mathcal{G}%
			})}\left\{ \sum_{i=1}^{N}\mathbb{E}^{\mathbb{Q}_{i}}[-X^{i}]-\sigma
		_{\Lambda ^{-1}(\mathcal{A})}(\mathbb{Q})-\sigma _{\mathcal{G}}(-\mathbb{Q}%
		)\right\} \text{,\quad }X\in C_{Z},
	\end{equation*}%
where $Bar_{\mathbb{A}}:=\left\{ \mathbb{Q\in }ca_{Z}^{+}\mid \sigma _{%
		\mathbb{A}}(\mathbb{Q})<+\infty \right\} $ is the domain of finiteness of
	the support function $\sigma _{\mathbb{A}}(\mathbb{Q}):=\sup_{W\in \mathbb{A}%
	}\sum_{i=1}^{N}\mathbb{E}^{\mathbb{Q}_{i}}[-W^{i}]$ of a set $\mathbb{A}%
	\mathcal{\subseteq B}(\mathbb{R}^{N}).$ Such a formula emphasizes the role of the defining ingredients of 
	$\rho $ in the penalty function $\rho^*$. A thorough and extensive analysis - in the non robust setting - of such decomposition of $\rho^*$ can be found in \cite[Section 3.3]{systemic:arduca:2021}
\end{rem}

We next prove Theorem \ref{main:thm:2}. On the technical side we need to
adapt some results of \cite{Che:Kup:Tan} to the multivariate case and we
provide them in Section~\ref{App:analytical}. Note that, differently from 
\cite{Che:Kup:Tan} we allow $\mathcal{A}$ to contain purely negative
acceptable position. This is important in the context of systemic risk in
order to include examples of aggregation functions which always yield
non-positive random variables, as in Example \ref{agg:fun:exmp} equation %
\eqref{negative utility}. We start with an easy observation.

\begin{lem}
\label{lem:cashadd} The map $\rho :\mathcal{B}\rightarrow \lbrack -\infty
,+\infty ]$ defined in \eqref{def:sysrho} is monotone decreasing, convex and
(systemically) cash additive, namely 
\begin{equation*}
\rho (X+c)=\rho (X)-\sum_{i=1}^{N}c^{i}\text{, for all }X\in \mathcal{B} 
\text{ and }c\in \mathbb{R}^{N}.
\end{equation*}
\end{lem}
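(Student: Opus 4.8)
The plan is to read off all three properties directly from the defining formula \eqref{def:sysrho}, using the three parts of Assumption~\ref{ass:standing}; only convexity is more than a formal manipulation, and it is exactly where part~(3) of the assumption is needed. Throughout, call a pair $(m,g)\in\mathbb{R}^{N}\times\mathcal{G}$ \emph{admissible for} $X$ if $\Lambda(m+X+g)\in\mathcal{A}$, so that $\rho(X)=\inf\{\sum_{i}m^{i}\mid (m,g)\text{ admissible for }X\}$.

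\emph{Monotonicity and cash additivity.} If $X\ge Y$ componentwise and $(m,g)$ is admissible for $Y$, then $m+X+g\ge m+Y+g$, so $\Lambda(m+X+g)\ge\Lambda(m+Y+g)\in\mathcal{A}$ since $\Lambda$ is increasing; monotonicity of $\mathcal{A}$ then gives $\Lambda(m+X+g)\in\mathcal{A}$, i.e.\ $(m,g)$ is admissible for $X$. Hence the admissible set for $Y$ is contained in that for $X$ and $\rho(X)\le\rho(Y)$. For cash additivity, fix $c\in\mathbb{R}^{N}$: the shift $m\mapsto m-c$ is a bijection of $\mathbb{R}^{N}$ carrying the admissible set for $X+c$ onto the admissible set for $X$, because $\Lambda((m-c)+(X+c)+g)=\Lambda(m+X+g)$, while it changes the objective $\sum_{i}m^{i}$ into $\sum_{i}m^{i}-\sum_{i}c^{i}$; taking infima yields $\rho(X+c)=\rho(X)-\sum_{i=1}^{N}c^{i}$.

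\emph{Convexity.} Fix $X_{0},X_{1}\in\mathcal{B}$ and $\lambda\in(0,1)$; we may assume $\rho(X_{0}),\rho(X_{1})<+\infty$, since otherwise the inequality holds trivially (using the standard convention for the right-hand side). Let $(m_{j},g_{j})$ be admissible for $X_{j}$, $j=0,1$. Then $\Lambda(m_{j}+X_{j}+g_{j})\in\mathcal{A}$ with $g_{j}\in\mathcal{G}$, so $m_{j}+X_{j}\in\Lambda^{-1}(\mathcal{A})-\mathcal{G}$ in the sense of Assumption~\ref{ass:standing}(3). By convexity of that set, $\lambda(m_{1}+X_{1})+(1-\lambda)(m_{0}+X_{0})\in\Lambda^{-1}(\mathcal{A})-\mathcal{G}$; that is, writing $m:=\lambda m_{1}+(1-\lambda)m_{0}$ and $X_{\lambda}:=\lambda X_{1}+(1-\lambda)X_{0}$, there exists $g\in\mathcal{G}$ with $\Lambda(m+X_{\lambda}+g)\in\mathcal{A}$. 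Thus $(m,g)$ is admissible for $X_{\lambda}$ and $\rho(X_{\lambda})\le\sum_{i}m^{i}=\lambda\sum_{i}m_{1}^{i}+(1-\lambda)\sum_{i}m_{0}^{i}$. Taking the infimum over admissible $(m_{0},g_{0})$ and, independently, over admissible $(m_{1},g_{1})$ gives $\rho(X_{\lambda})\le\lambda\rho(X_{1})+(1-\lambda)\rho(X_{0})$.

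I do not expect a genuine obstacle here, as each of the three verifications only unwinds the definition together with one clause of Assumption~\ref{ass:standing}. The one point that is not purely formal is convexity: it would fail in general without the convexity of $\Lambda^{-1}(\mathcal{A})-\mathcal{G}$, since $\Lambda$, $\mathcal{A}$ and $\mathcal{G}$ interact with the affine shift $m\mapsto m+X$ in a way that is not convex componentwise — this is precisely why Assumption~\ref{ass:standing}(3) is imposed (and why, given (1)--(2), convexity of $\mathcal{A}$ and $\mathcal{G}$ is a convenient sufficient condition, as remarked after the assumption).
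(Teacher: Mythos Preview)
Your proof is correct and follows essentially the same approach as the paper's: both establish monotonicity from $\Lambda$ increasing and $\mathcal{A}$ monotone, treat cash additivity as a trivial change of variables, and derive convexity directly from the assumed convexity of $\Lambda^{-1}(\mathcal{A})-\mathcal{G}$ by taking a convex combination of admissible shifts. Your version simply spells out a few more details (the ``admissible pair'' language, the explicit bijection $m\mapsto m-c$) than the paper does.
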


\begin{proof}
Since $\Lambda $ is increasing and $\mathcal{A}$ is monotone, the set $%
\Lambda ^{-1}(\mathcal{A})-\mathcal{G}$ is monotone and then monotonicity of $\rho$ is easily checked.
Regarding convexity, for $m,n\in \mathbb{R}^{N}$
such that $m+X,\ n+Y \in \Lambda ^{-1}(\mathcal{A})-\mathcal{G}$, one gets 
$
\lambda m+(1-\lambda )n+\left( \lambda X+(1-\lambda )Y\right) \in \Lambda
^{-1}(\mathcal{A})-\mathcal{G},
$
for all $\lambda\in[0,1]$, by the convexity of $\Lambda ^{-1}(\mathcal{A})- 
\mathcal{G}$ . Hence, 
\begin{equation*}
\rho \left( \lambda X+(1-\lambda )Y\right) \leq \lambda \sum_{i=1}^{N}{m^{i}}
+(1-\lambda )\sum_{i=1}^{N}{n^{i}}
\end{equation*}
and convexity now follows by taking the infimum over $m$ and $n$ satisfying $%
m+X,$ $n+Y\in \Lambda ^{-1}(\mathcal{A})-\mathcal{G}$ on the right-hand
side. The cash additivity property is trivial.
\end{proof}

\begin{proof}[Proof of Theorem~\protect\ref{main:thm:2}]
\ref{N:dual:repr:phi}$\implies $\ref{N:gen:mart:cond}. Let $X\in C_{Z}\cap
(\Lambda ^{-1}(\mathcal{A})-\mathcal{G})$. As $\Lambda (X+g)\in \mathcal{A}$
for some $g\in \mathcal{G}$, the definition of $\rho $ implies that $\rho
(X)\leq 0$. By the dual formula for $\rho $ in item \ref{N:dual:repr:phi},
one obtains 
\begin{equation*}
\gamma =\rho (0)=\max_{\mathbb{Q}\in \mathcal{P}_{Z}}{-\rho ^{\ast }(-%
\mathbb{Q})}=-\min_{\mathbb{Q}\in \mathcal{P}_{Z}}{\rho ^{\ast }(-\mathbb{Q})%
},
\end{equation*}%
implying the existence of $\hat{\mathbb{Q}}\in \mathcal{P}_{Z}$ such that $%
\rho ^{\ast }(-\hat{\mathbb{Q}})=-\gamma $. Using the definition of $\rho
^{\ast }$ and $\rho (X)\leq 0$, we have 
\begin{align*}
-\gamma =\rho ^{\ast }(-\hat{\mathbb{Q}})& \geq \sup_{X\in C_{Z}\cap
(\Lambda ^{^{-1}}\!(\mathcal{A})-\mathcal{G})}\left\{ \sum_{i=1}^{N}{\mathbb{%
E}^{\hat{\mathbb{Q}}^{i}}[-X^{i}]}-\rho (X)\right\} \\
& \geq \sup_{X\in C_{Z}\cap (\Lambda ^{^{-1}}\!(\mathcal{A})-\mathcal{G}%
)}\left\{ \sum_{i=1}^{N}{\mathbb{E}^{\hat{\mathbb{Q}}^{i}}[-X^{i}]}\right\} ,
\end{align*}%
from which item \ref{N:gen:mart:cond} follows readily.

\ref{N:gen:mart:cond}$\implies $\ref{N:no:arb:cond}. Let $m\in\mathbb{R}^N$, 
$g\in\mathcal{G}$ such that $\Lambda (m+g)\in \mathcal{A}$. By item \ref%
{N:gen:mart:cond}, there exists $\mathbb{Q}\in \mathcal{P}_{Z}$ such that 
$
\sum_{i=1}^Nm^i-\gamma=\sum_{i=1}^N E^{\mathbb{Q}_{i}}[m^{i}]-\gamma \geq 0,
$
from which item \ref{N:no:arb:cond} follows.

\ref{N:no:arb:cond}$\implies $\ref{N:dual:repr:phi}. Set $\Phi (X):=\rho
(-X) $ and notice that $\Phi $ is monotone increasing and convex. By the
cash additivity property of Lemma \ref{lem:cashadd}, $\Phi (m)=\Phi
(0)+\sum_{i=1}^{N}{m^{i}}$, for all $m\in \mathbb{R}^{N}$. We now show that $%
\Phi (0)=\gamma$.
By item \ref{N:no:arb:cond}, if $m\in {\mathbb{R}}^{N}$ and $\Lambda
(m+g)\in \mathcal{A}$ then $\sum m^{i}\geq \gamma $, which implies $\Phi
(0)=\rho (0)\geq \gamma $. Moreover, by Condition~\ref{ip:1:main:thm:2}, for
all $n\in \mathbb{N}$ and $z=z(n)$ large enough the following holds 
\begin{equation*}
\gamma \leq \Phi (0)\leq \Phi \left( n(Z-z)^{+}\mathbf{1}\right) \leq \frac{N%
}{n}+\gamma .
\end{equation*}%
This implies $\rho (0)=\Phi (0)=\gamma $. We now show that $\Phi $ is real
valued on $\mathcal{B}_{Z}$. Let $X\in \mathcal{B}_{Z}$ and $k\in \mathbb{N}$
such that $-\frac{1}{2}kZ\mathbf{1}\leq X\leq \frac{1}{2}kZ\mathbf{1}$.
Using \ref{ip:1:main:thm:2}, there exists $z=z(k)\in \mathbb{R}_{+}$ large
enough such that $\Phi (k(Z-z)^{+}\mathbf{1})\leq \frac{N}{k}+\gamma <\infty 
$. By cash additivity, $\Phi (kz\mathbf{1})=Nkz<\infty $. By monotonicity
and convexity of $\Phi $ we then deduce 
\begin{eqnarray*}
\Phi (X) &\leq &\Phi \left( \frac{k}{2}Z\mathbf{1}\right) =\Phi \left( \frac{%
1}{2}k(Z-z)\mathbf{1}+\frac{1}{2}kz\mathbf{1}\right) \leq \frac{1}{2}\Phi
\left( k(Z-z)\mathbf{1}\right) +\frac{1}{2}\Phi (kz\mathbf{1}) \\
&\leq &\frac{1}{2}\Phi \left( k(Z-z)^{+}\mathbf{1}\right) +\frac{1}{2}\Phi
(kz\mathbf{1})<\infty .
\end{eqnarray*}%
By convexity, for any $Y$ we have $\gamma =\Phi (0)\leq \frac{1}{2}\left(
\Phi (Y)+\Phi (-Y)\right) $, hence $\Phi (X)\geq \Phi (-\frac{k}{2}Z\mathbf{1%
})\geq 2\gamma -\Phi (\frac{k}{2}Z\mathbf{1})>-\infty $. The conclusion now
follows as in the univariate case. Using \cite[Lemma~$\mathbf{A.2}$]%
{Che:Kup:Tan}, Condition~\ref{ip:1:main:thm:2} implies that also Condition~%
\eqref{dual:repr:thm:ip} in Theorem~\ref{dual:repr:thm} below holds. Using
Theorem~\ref{dual:repr:thm} we obtain the representation $\rho (X)=\Phi
(-X)=\max_{\mu \in ca_{Z}^{+}}\left\{ \left\langle -X,\mu \right\rangle
-\Phi ^{\ast }(\mu )\right\} $ for all $X\in C_{Z}$. It is now sufficient to
note that $\Phi (m^{[i]})=m+\gamma $, where $m^{[i]}$ 
is the vector with the $i$-th coordinate equals to $m\in \mathbb{R}$ and all
the others are zero, to observe that $\Phi ^{\ast }(\mu )=+\infty $ for $\mu
\in ca_{Z}^{+}\backslash \mathcal{P}_{Z}$. The desired dual representation
in item \ref{N:dual:repr:phi} follows from ${\rho ^{\ast }(-\mathbb{\mu })=}%
\Phi ^{\ast }(\mu ),$ $\mu \in ca_{Z}^{+}.$

This proves \ref{N:no:arb:cond}$\iff$\ref{N:gen:mart:cond}$\iff$\ref%
{N:dual:repr:phi}. Suppose now Condition~\eqref{ip:2:main:thm:2} holds. The
implications \ref{N:dual:repr:phi:2}$\implies $\ref{N:gen:mart:cond:2} and %
\ref{N:gen:mart:cond:2}$\implies$\ref{N:gen:mart:cond} are easily seen and
similar to above. We conclude by proving \ref{N:dual:repr:phi}$\implies$\ref%
{N:dual:repr:phi:2}. Consider once again $\Phi(X):=\rho(-X)$. By \eqref{ip:2:main:thm:2}, $\Phi$ satisfies condition \ref{cond:3:dual:U} of Theorem~\ref{dual:repr:thm:U}
below. From item \ref{N:dual:repr:phi} and Condition~\ref{ip:1:main:thm:2}, $%
\Phi$ is a real-valued, increasing convex functional on $\mathcal{B}_{Z}$
that satisfies Condition~(\ref{zero:cont:ip}) below. The desired
representation follows from Theorem~\ref{dual:repr:thm:U} Condition \ref%
{cond:1:dual:U} and the fact that $\Phi^{*}(\mu)=+\infty$ for $\mu\in
ca_{Z}^{+}\backslash\mathcal{P}_{Z} $, as in \ref{N:no:arb:cond}$\implies$%
\ref{N:dual:repr:phi}.
\end{proof}
We end this section by discussing when the assumptions of Theorem \ref%
{main:thm:2} are satisfied. A trivial case for \ref{ip:1:main:thm:2}
is when $\Omega $ is a compact subset of $\mathbb{R}^{T(J+1)}$, as $Z$ is a
continuous function. On the other hand, Assumption~\eqref{ip:2:main:thm:2}
is verified if $\rho $ is continuous from below, since for all $X\in U_{Z}$,
there exists a sequence $\left( Y_{n}\right) _{n\in \mathbb{N}}$ in $C_{Z}$
such that $Y_{n}\uparrow X$. We next present another sufficient condition
for \ref{ip:1:main:thm:2}.

\begin{prop}
\label{suff:cond:ip:1:main:thm:2} 
Consider a continuous and strictly increasing aggregation function $\Lambda$ and $\alpha:\mathcal{P}_{Z}\to\mathbb{R}_{+}\cup\{+\infty\}$ satisfying:

\begin{enumerate}
	
	\item \label{item1A} $\inf_{\mathbb{P}\in\mathcal{P}_{Z}}\alpha(\mathbb{P}%
	)=0 $.
	
	\item \label{item2A} $\alpha(\mathbb{P})\geq\frac{1}{N}\sum_{i=1}^{N}\mathbb{%
		E}^{\mathbb{P}^{i}}\beta(Z) \text{ for all } \mathbb{P}\in\mathcal{P}_{Z}$,
	where $\beta:[1, +\infty)\to\mathbb{R}$ is an increasing function with the
	property $\lim_{x\to+\infty}\frac{\beta(x)}{x}=+\infty$.
\end{enumerate}
 
Condition~$\ref{ip:1:main:thm:2}$ holds for the following acceptance set:
\begin{equation}
	\mathcal{A}:=\left\{ X\in \mathcal{B}_{Z}(\mathbb{R}):\sum_{i=1}^{N}\mathbb{E%
	}^{\mathbb{P}^{i}}[X]+\alpha (\mathbb{P})\geq 0\text{ for all }\mathbb{P}\in 
	\mathcal{P}_{Z}\right\} +\mathcal{B}^{+}(\mathbb{R}).  \label{def:A:alpha}
\end{equation}
\end{prop}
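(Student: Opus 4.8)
\emph{Proof strategy.} I would verify condition~\ref{ip:1:main:thm:2} with $\gamma=0$ (which is $\le0$, as required) and, for every $n$, with the trivial strategy $g=\mathbf{0}\in\mathcal{G}$; the only parameter left to tune is the truncation level $z=z_n$. Writing $\phi(t):=\Lambda(t\mathbf{1})$ for $t\in\mathbb{R}$ — a continuous, concave, strictly increasing function with $\phi(0)=0$ — the statement reduces to: for each $n\in\mathbb{N}$ there is $z_n\ge0$ such that $W_n:=\phi\!\bigl(\tfrac1n-n(Z-z_n)^{+}\bigr)$ lies in $\mathcal{A}$. Since $\mathcal{A}$ is the first set of \eqref{def:A:alpha} plus $\mathcal{B}^{+}(\mathbb{R})$, and $W_n\le\phi(\tfrac1n)$ is bounded above, it suffices to prove (i) $W_n\ge-M_nZ$ for some constant $M_n$, so that $W_n\in\mathcal{B}_{Z}(\mathbb{R})$, and (ii) $\sum_{i=1}^{N}\mathbb{E}^{\mathbb{P}^{i}}[W_n]+\alpha(\mathbb{P})\ge0$ for every $\mathbb{P}\in\mathcal{P}_{Z}$; and, by hypothesis~\ref{item2A}, (ii) follows once $\sum_{i=1}^{N}\mathbb{E}^{\mathbb{P}^{i}}\bigl[W_n+\tfrac1N\beta(Z)\bigr]\ge0$.

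\emph{Construction.} For fixed $n$ I would split $\Omega$ along the (compact) sublevel set $\{Z\le z_n\}$. On $\{Z\le z_n\}$ the argument of $\phi$ is the constant $\tfrac1n$, so $W_n=\phi(\tfrac1n)$, which is \emph{strictly positive} because $\Lambda$ is strictly increasing and $\phi(0)=0$; hence $W_n\ge-M_nZ$ there for any $M_n\ge0$, and for (ii) the integrand $\phi(\tfrac1n)+\tfrac1N\beta(Z)$ on this set is controlled since the positive constant $\phi(\tfrac1n)$ dominates $\tfrac1N\beta(Z)$ once one recalls that $\beta$ is increasing (with a mild caveat when $\beta(1)<0$, see below). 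On $\{Z>z_n\}$ the argument equals $\tfrac1n-n(Z-z_n)\le\tfrac1n$; assuming $\phi$ has finite asymptotic slope $L$ (automatic for the Lipschitz aggregators of Example~\ref{agg:fun:exmp}), concavity yields the affine lower bound $\phi(s)\ge\phi(\tfrac1n)-L\bigl(\tfrac1n-s\bigr)$ for $s\le\tfrac1n$, whence $W_n\ge\phi(\tfrac1n)-L\,n(Z-z_n)\ge\phi(\tfrac1n)-L\,nZ$. Choosing $z_n$ so large that $\tfrac1N\beta(Z)\ge L\,nZ$ on $\{Z>z_n\}$ — possible because $\lim_{x\to+\infty}\beta(x)/x=+\infty$ — makes both (i), with $M_n=Ln$, and the integrand $W_n+\tfrac1N\beta(Z)$ non-negative on the complement of the sublevel set. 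Combining the two regions gives (i) and (ii), hence $W_n\in\mathcal{A}$ and therefore condition~\ref{ip:1:main:thm:2}.

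\emph{Main obstacle.} The one genuinely non-routine ingredient is the far-tail estimate just used: after a sufficiently large choice of $z_n$, the value of $\Lambda$ at the very negative vector $\bigl(\tfrac1n-n(Z-z_n)\bigr)\mathbf{1}$ must stay above $-\tfrac1N\beta(Z)$ where $Z$ is large. This is exactly where the super-linear growth in \ref{item2A} is indispensable — it lets $-\tfrac1N\beta(Z)$ absorb the linear-in-$Z$ drift inside $\Lambda$ — while continuity of $\Lambda$ handles the compact intermediate range of $Z$ and strict monotonicity supplies the strict inequality $\phi(\tfrac1n)>0$ used on the sublevel set; note that this estimate genuinely uses that $\Lambda$ has finite asymptotic slope, so for rapidly decaying aggregators one would instead fall back on $\Omega$ compact or on a richer market $\mathcal{G}$. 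A smaller point: $\beta$ may be negative near $\{Z=1\}$ — necessarily only on a bounded $Z$-range, and in fact $\beta(1)\le0$ follows from \ref{item1A}--\ref{item2A} — so there the pointwise comparison for (ii) must be read through its integrated form, using that $\phi(\tfrac1n)>0$ outweighs $\tfrac1N\beta(Z)$; measurability of $W_n$ is clear since $\phi$ is continuous and $Z$ is Borel. The whole argument is the multivariate analogue of the estimate behind condition~(2.1) of \cite{Che:Kup:Tan}. Incidentally, \ref{item1A} is not otherwise needed for this implication; it serves to make $\gamma=0$ the relevant constant in Theorem~\ref{main:thm:2} (e.g.\ $\rho(0)=0$).
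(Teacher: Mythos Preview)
Your approach is genuinely different from the paper's, and the difference matters: the paper does \emph{not} attempt a pointwise comparison between $W_n$ and $-\tfrac1N\beta(Z)$. Instead it first shows (via Prokhorov, using the superlinear growth in \ref{item2A}) that the sublevel sets $\{\alpha\le a\}$ are relatively $\sigma(\mathcal{P}_Z,C_Z)$-compact, rewrites $\mathcal{A}=\{X:\tau(X)\le 0\}+\mathcal{B}^{+}(\mathbb{R})$ for $\tau(X)=\sup_{\mathbb{P}}\bigl(\sum_i\mathbb{E}^{\mathbb{P}^i}[-X]-\alpha(\mathbb{P})\bigr)$, and then applies a Dini-type monotone-convergence lemma (Lemma~A.4 of \cite{Che:Kup:Tan}) to the decreasing family $-\Lambda(X_z)\downarrow-\Lambda(\tfrac1n\mathbf{1})$, obtaining $\tau(\Lambda(X_z))\downarrow -N\Lambda(\tfrac1n\mathbf{1})<0$. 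This uses only the \emph{continuity} of $\Lambda$ and item~\ref{item1A}.

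Your direct route has a real gap, which you in fact flag yourself: the affine lower bound $\phi(s)\ge\phi(\tfrac1n)-L(\tfrac1n-s)$ requires the diagonal slope $L:=\lim_{s\to-\infty}\phi'(s)$ to be finite, i.e.\ $\Lambda$ to be Lipschitz at $-\infty$ along the diagonal. The proposition does not assume this. If $L=+\infty$ (e.g.\ $\phi(s)\sim-(-s)^{3/2}$ for $s\to-\infty$, which is concave, strictly increasing, continuous with $\phi(0)=0$), then on $\{Z>z_n\}$ one has $W_n\sim -n^{3/2}(Z-z_n)^{3/2}$, and no choice of $z_n$ makes $\tfrac1N\beta(Z)\ge -W_n$ pointwise when $\beta$ grows only like $Z\log Z$; worse, $W_n/Z$ is then unbounded, so $W_n\notin\mathcal{B}_Z(\mathbb{R})$ and your item~(i) already fails. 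The paper's compactness argument is precisely what circumvents this: it never compares $\Lambda(X_z)$ with $\beta(Z)$ pointwise, it only needs $-\Lambda(X_z)\downarrow$ to a constant and the sup over the compact effective domain of $\alpha$ to pass to the limit.

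Two smaller points. First, your treatment of the region where $\beta(Z)<0$ is not closed: the pointwise inequality $\phi(\tfrac1n)+\tfrac1N\beta(Z)\ge 0$ may fail for large $n$ (since $\phi(\tfrac1n)\to 0$ while $\tfrac1N\beta(1)<0$ is fixed), and ``reading it through its integrated form'' would require exactly the kind of uniform control over $\mathbb{P}$ that the paper obtains via compactness. Second, item~\ref{item1A} is used in the paper's proof to identify $\tau(\Lambda(\tfrac1n\mathbf{1}))=-N\Lambda(\tfrac1n\mathbf{1})$; it is not a side remark there.
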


\begin{proof}
The proof is similar to that of \cite[Proposition~$\mathbf{2.2}$]%
{Che:Kup:Tan} so we only present the main differences. By passing to the
lower convex hull, $\beta$ in \ref{item2A} can be assumed convex and
Jensen's inequality yields 
\begin{equation*}
\alpha(\mathbb{P})\geq\frac{1}{N}\sum_{i=1}^{N}\mathbb{E}^{\mathbb{P}
^{i}}\beta(Z)\geq\beta\left(\sum_{i=1}^{N}\mathbb{E}^{\mathbb{P}
^{i}}Z\right).
\end{equation*}
Consider now the sets $T_{a}:=\left\{\mathbb{P}\in\mathcal{P} _{Z} :
\sum_{i=1}^{N}\mathbb{E}^{\mathbb{P}^{i}}\beta(Z)\leq a\right\}, a\in 
\mathbb{R}$. By using that $\beta$ is increasing and $Z\ge 1$, we deduce
that, for each $\mathbb{P}^i$ with $\mathbb{P}\in T_{a}$, it holds $\mathbb{E%
}^{\mathbb{P}^{i}}\beta(Z)\leq a - (N-1)\beta(1)$ for all $i=1, \dots, N$.
We deduce that the one-dimensional projections $\pi^{i}( T_{a})$ are $\sigma(%
\mathcal{P} _{Z}, C_{Z})$-compact. Hence, by identifying $(C_{Z}(\mathbb{R}%
), ca_{Z}^{+}(\mathbb{R}))$ with $(C_{b}( \mathbb{R}), ca^{+}(\mathbb{R}))$,
one can use Prokhorov's theorem to obtain the compactness of $\pi^{i}(
T_{a}) $. It follows that the sets $\left\{\mathbb{P}\in\mathcal{P}_{Z} :
\alpha( \mathbb{P})\leq a\right\}, a\in\mathbb{R}$ are relatively $\sigma(%
\mathcal{P} _{Z}, C_{Z})$-compact. Define now $\tau:\mathcal{B}_{Z}(\mathbb{R%
})\to\mathbb{R}$ as 
\begin{equation*}
\tau(X) :=\sup_{\mathbb{P}\in\mathcal{P}_{Z}}\left(\sum_{i=1}^{N}\mathbb{E}^{%
\mathbb{P}^{i} }[-X] - \alpha(\mathbb{P})\right),
\end{equation*}
so that one can rewrite $\mathcal{A}=\left\{X\in\mathcal{B}_{Z}(\mathbb{R})
: \tau(X)\leq0\right\} + \mathcal{B} ^{+}(\mathbb{R})$. Fix $n\in\mathbb{N}$
and take $X_z:=(\frac{1}{n}-n(Z-z)^{+})\mathbf{1}\in\mathcal{B}_Z$. Using
that $\Lambda$ is continuous and non-decreasing, $-\Lambda(X_z)\downarrow-%
\Lambda(\frac{1}{n}\mathbf{1})$ as $z\to\infty$. Using the compactness of
the sets $\left\{\mathbb{P}\in\mathcal{P}_{Z} : \alpha( \mathbb{P})\leq
a\right\}$, we can now apply \cite[Lemma~$\mathbf{A.4}$]{Che:Kup:Tan}, which
readily extends to the multivariate case, to deduce that $%
\tau(\Lambda(X_z))\downarrow \tau(\Lambda(\frac{1}{n}\mathbf{1}))=-N\Lambda(%
\frac{1}{n}\mathbf{1})$. The last equality follows from \ref{item1A} and,
using that $\Lambda$ is strictly increasing in this proposition, we have $%
-N\Lambda(\frac{1}{n}\mathbf{1})<0$. We deduce that for $z$ large enough, $%
\Lambda(X_z)$ is acceptable. Since $n\in\mathbb{N}$ was arbitrary, Condition~%
$\ref{ip:1:main:thm:2}$ is satisfied with the choice of $\gamma=0$.
\end{proof}

\subsection{Different aggregation functions}

\label{sec:Gamma} As discussed in the introduction, an alternative way of
measuring market-adjusted systemic risk consists in the use of a second
aggregation function $\Gamma :\mathbb{R}^{N}\rightarrow \mathbb{R}$ for the
payoffs of the trading strategies. In this scenario, $\rho_{\Gamma}$ is
defined as in $(\ref{def:sysrho2})$ and Theorem~\ref{main:thm:2} changes
consequently. In particular, in this subsection the following Assumption replaces
Assumption \ref{ass:standing}:

\begin{assmp}
\label{ass:standing:Gamma}

\begin{enumerate}
\item $\mathcal{A\subseteq B}(\mathbb{R})$ is monotone and $0\in \mathcal{A}$
; $\mathcal{G\subseteq B}$ with $\mathbf{0}\in \mathcal{G}$.

\item The aggregation functions $\Lambda:\mathbb{R}^{N}\rightarrow \mathbb{R}
$ and $\Gamma:\mathbb{R}^{N}\rightarrow \mathbb{R} $ are increasing with
respect to the componentwise order, concave and $\Lambda (\mathbf{0})=\Gamma
(\mathbf{0})=0$.

\item The set $\Lambda^{-1}(\mathcal{A} - \Gamma(\mathcal{G})) = \left\{
X\in \mathcal{B }\mid \Lambda (X) +\Gamma(g) \in \mathcal{A}\text{ for some }%
g\in \mathcal{G}\right\}$ is convex.
\end{enumerate}
\end{assmp}

As in the previous case, $\rho_{\Gamma}$ is still a monotone decreasing,
convex and (systematically) cash additive map, and the same functional
analytic setting is considered.

\begin{assmp}
\label{ass:Grich:Gamma} There exists $\gamma \leq 0$ such that 
\begin{equation}
\forall n\in \mathbb{N},\ \exists z\in \lbrack 0,+\infty ),\text{ }\exists
g\in \mathcal{G}\text{ such that }\Lambda \left( \left[ \frac{\gamma }{N}+ 
\frac{1}{n}-n(Z-z)^{+}\right] \mathbf{1}\right)+\Gamma(g) \in \mathcal{A}. 
\tag*{(A)}  \label{ip:1:main:thm}
\end{equation}
\end{assmp}

\begin{thm}
\label{main:thm} Under Assumption~\ref{ass:Grich:Gamma} the following are
equivalent:

\begin{enumerate}
\item \label{N:no:arb:cond:Gamma} $m\in \mathbb{R}^{N},$ $\sum m^{i}<\gamma
\Rightarrow \nexists g\in \mathcal{G}:\Lambda(m)+\Gamma(g)\in \mathcal{A}.$

\item \label{N:gen:mart:cond:Gamma} There exists $\mathbb{Q}=(\mathbb{Q}%
_{1},\ldots, \mathbb{Q}_{N})\in \mathcal{P}_{Z}$ such that $\sum_{i=1}^{N}%
\mathbb{E}^{ \mathbb{Q}_{i}}[X^{i}]-\gamma \geq 0$ for all $X\in C_{Z}$
satisfying $\Lambda (X)+\Gamma(g)\in \mathcal{A}$ for some $g\in \mathcal{G} 
$.

\item \label{N:dual:repr:phi:Gamma} $\rho _{\Gamma }$ is real valued on $%
\mathcal{B}_{Z},$ $\rho _{\Gamma }(0)=\gamma $ and 
\begin{equation*}
\rho _{\Gamma }(X)=\max_{\mathbb{Q}=(\mathbb{Q}_{1},\ldots ,\mathbb{Q}%
_{N})\in \mathcal{P}_{Z}}\left\{ \sum_{i=1}^{N}\mathbb{E}^{\mathbb{Q}%
_{i}}[-X^{i}]-\rho _{\Gamma }^{\ast }(-\mathbb{Q})\right\} \text{,\quad }%
X\in C_{Z}.
\end{equation*}%
\end{enumerate}
If, in addition, one has 
\begin{equation}
\rho_{\Gamma} (X)=\inf_{Y\in C_{Z},Y\leq X}{\rho_{\Gamma} (Y)}\text{ for all 
}X\in U_{Z},  \label{ip:2:main:thm}
\end{equation}
then \ref{N:no:arb:cond:Gamma}--\ref{N:dual:repr:phi:Gamma} are also
equivalent to each one of the following two conditions:

\begin{enumerate}
\setcounter{enumi}{3}

\item \label{N:gen:mart:cond:2:Gamma}There exists $\mathbb{Q}=(\mathbb{Q}%
_{1},\ldots, \mathbb{Q}_{N})\in \mathcal{P}_{Z}$ such that $\sum_{i=1}^{N}%
\mathbb{E}^{ \mathbb{Q}_{i}}[X^{i}]-\gamma \geq 0$ for all $X\in U_{Z}$
satisfying $\Lambda (X)+\Gamma(g)\in \mathcal{A}$ for some $g\in \mathcal{G} 
$.

\item \label{N:dual:repr:phi:2:Gamma} $\rho _{\Gamma }$ is real valued on $%
\mathcal{B}_{Z},$ $\rho _{\Gamma }(0)=\gamma $ and 
\begin{equation*}
\rho _{\Gamma }(X)=\max_{\mathbb{Q}=(\mathbb{Q}_{1},\ldots ,\mathbb{Q}%
_{N})\in \mathcal{P}_{Z}}\left\{ \sum_{i=1}^{N}\mathbb{E}^{\mathbb{Q}%
_{i}}[-X^{i}]-\rho _{\Gamma }^{\ast }(-\mathbb{Q})\right\} \text{,\quad }%
X\in U_{Z}.
\end{equation*}
\end{enumerate}
\end{thm}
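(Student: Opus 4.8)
The plan is to transcribe the proof of Theorem~\ref{main:thm:2} essentially line by line, with the set $\Lambda^{-1}(\mathcal{A})-\mathcal{G}$ replaced throughout by $\Lambda^{-1}(\mathcal{A}-\Gamma(\mathcal{G}))=\{X\in\mathcal{B}\mid \Lambda(X)+\Gamma(g)\in\mathcal{A}\text{ for some }g\in\mathcal{G}\}$. First I would record the preliminaries: by the third part of Assumption~\ref{ass:standing:Gamma} this set is convex, and since $\Lambda$ is increasing and $\mathcal{A}$ is monotone it is also monotone, so, exactly as in Lemma~\ref{lem:cashadd}, $\rho_\Gamma$ is monotone decreasing, convex and (systemically) cash additive. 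I would also observe, as in the remark following Assumption~\ref{ass:Grich}, that choosing $m=[\tfrac{\gamma}{N}+\tfrac1n]\mathbf{1}$, $X=-n(Z-z)^+\mathbf{1}$ and the $g\in\mathcal{G}$ supplied by Condition~\ref{ip:1:main:thm}, the definition of $\rho_\Gamma$ yields $\rho_\Gamma(-n(Z-z_n)^+\mathbf{1})\le\tfrac{N}{n}+\gamma$ for a suitable $z_n\in\mathbb{R}_+$.

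Next I would run the cycle \ref{N:dual:repr:phi:Gamma}$\Rightarrow$\ref{N:gen:mart:cond:Gamma}$\Rightarrow$\ref{N:no:arb:cond:Gamma}$\Rightarrow$\ref{N:dual:repr:phi:Gamma}, reproducing the three corresponding arguments. For the first, if $X\in C_Z$ satisfies $\Lambda(X)+\Gamma(g)\in\mathcal{A}$ for some $g$, then $\rho_\Gamma(X)\le0$; evaluating the dual formula of item~\ref{N:dual:repr:phi:Gamma} at $X=0$ produces some $\hat{\mathbb{Q}}\in\mathcal{P}_Z$ with $\rho_\Gamma^*(-\hat{\mathbb{Q}})=-\gamma$, and the definition of $\rho_\Gamma^*$ together with $\rho_\Gamma(X)\le0$ forces $\sum_i\mathbb{E}^{\hat{\mathbb{Q}}^i}[X^i]\ge\gamma$. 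For the second, I would apply item~\ref{N:gen:mart:cond:Gamma} to a constant $X=m$ whenever $\Lambda(m)+\Gamma(g)\in\mathcal{A}$, obtaining $\sum m^i-\gamma=\sum\mathbb{E}^{\mathbb{Q}_i}[m^i]-\gamma\ge0$. For the third, I would set $\Phi(X):=\rho_\Gamma(-X)$, use cash additivity to write $\Phi(m)=\Phi(0)+\sum m^i$, deduce $\Phi(0)\ge\gamma$ from item~\ref{N:no:arb:cond:Gamma} and $\gamma\le\Phi(0)\le\Phi(n(Z-z)^+\mathbf{1})\le\tfrac{N}{n}+\gamma$ from Condition~\ref{ip:1:main:thm}, hence $\rho_\Gamma(0)=\gamma$; then I would use the sandwich $-\tfrac12kZ\mathbf{1}\le X\le\tfrac12kZ\mathbf{1}$ together with monotonicity and convexity of $\Phi$ to show that $\Phi$ is real valued on $\mathcal{B}_Z$, invoke Theorem~\ref{dual:repr:thm} (whose hypothesis follows from Condition~\ref{ip:1:main:thm} via \cite[Lemma~A.2]{Che:Kup:Tan}) to get $\rho_\Gamma(X)=\max_{\mu\in ca_Z^+}\{\langle -X,\mu\rangle-\Phi^*(\mu)\}$ on $C_Z$, and finally note that $\Phi(m^{[i]})=m+\gamma$ forces $\Phi^*(\mu)=+\infty$ for $\mu\in ca_Z^+\setminus\mathcal{P}_Z$, which gives item~\ref{N:dual:repr:phi:Gamma}.

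For the addendum under \eqref{ip:2:main:thm}, I would note that \ref{N:dual:repr:phi:2:Gamma}$\Rightarrow$\ref{N:gen:mart:cond:2:Gamma} and \ref{N:gen:mart:cond:2:Gamma}$\Rightarrow$\ref{N:gen:mart:cond:Gamma} are immediate (the latter since $C_Z\subseteq U_Z$), and prove \ref{N:dual:repr:phi:Gamma}$\Rightarrow$\ref{N:dual:repr:phi:2:Gamma} by checking that $\Phi=\rho_\Gamma(-\,\cdot\,)$ satisfies Condition~\ref{cond:3:dual:U} of Theorem~\ref{dual:repr:thm:U} thanks to \eqref{ip:2:main:thm}, is real valued, increasing and convex on $\mathcal{B}_Z$, and satisfies Condition~\eqref{zero:cont:ip} by item~\ref{N:dual:repr:phi:Gamma} and Condition~\ref{ip:1:main:thm}; then Theorem~\ref{dual:repr:thm:U} applies through its Condition~\ref{cond:1:dual:U}, again with $\Phi^*$ finite only on $\mathcal{P}_Z$.

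I do not expect a genuine obstacle: the analytic results of Section~\ref{App:analytical} are stated for an abstract real-valued increasing convex cash-additive functional, so they apply verbatim to $\Phi=\rho_\Gamma(-\,\cdot\,)$, while the algebraic steps in the three implications use only convexity and monotonicity of $\Lambda^{-1}(\mathcal{A}-\Gamma(\mathcal{G}))$ and cash additivity of $\rho_\Gamma$, all of which are already available. The single point worth a quick verification is that Condition~\ref{ip:1:main:thm} still delivers the limiting hypotheses of Theorem~\ref{dual:repr:thm} and Condition~\eqref{zero:cont:ip}; but this is exactly the computation carried out for $\rho$, with the payoff term $\Gamma(g)$ simply carried along unchanged.
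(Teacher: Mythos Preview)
Your proposal is correct and follows exactly the approach the paper indicates: the paper states that ``the proof of this theorem is analogous to the one of Theorem~\ref{main:thm:2} and is omitted,'' and your transcription, replacing $\Lambda^{-1}(\mathcal{A})-\mathcal{G}$ by $\Lambda^{-1}(\mathcal{A}-\Gamma(\mathcal{G}))$ and $\rho$ by $\rho_\Gamma$ throughout, is precisely what that analogy amounts to. All the ingredients you flag (monotonicity and convexity of the new acceptance set, cash additivity of $\rho_\Gamma$, the bound $\rho_\Gamma(-n(Z-z_n)^+\mathbf{1})\le\tfrac{N}{n}+\gamma$ from Assumption~\ref{ass:Grich:Gamma}, and the applicability of Theorems~\ref{dual:repr:thm} and~\ref{dual:repr:thm:U}) go through unchanged.
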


The proof of this theorem is analogous to the one of Theorem~\ref{main:thm:2}
and is omitted. 

\section{Multivariate analytical results and proofs}

\label{App:analytical} 
This section is dedicated to the extension of the analytical results in
Section~\textbf{A.1} of \cite{Che:Kup:Tan}. The space $C_{Z}$ of continuous
functions $X:\Omega \rightarrow \mathbb{R}^{N}$ such that each component of $%
X/Z$ is bounded forms a Stone vector lattice, with the partial order given
by $X\leq Y\iff X^{i}(\omega )\leq Y^{i}(\omega )~\forall \omega \in \Omega
,~\forall i=1,\dots ,N$. In \cite{Che:Kup:Tan} the cone $ca_{Z}^{+}(\mathbb{R%
})$ is endowed with the weak convergences topology $\sigma \left( ca_{Z}^{+}(%
\mathbb{R}),C_{Z}(\mathbb{R})\right) $, derived by the one described in
Chapter~$\mathbf{8}$ of \cite{bogachev2007measure}. Since a sequence of
multivariate measures $\left( \mu _{n}=(\mu _{n}^{1},\dots ,\mu
_{n}^{N})\right) _{n}$ weakly converges ($\rightharpoonup $) to a
multivariate measure $\mu =(\mu ^{1},\dots ,\mu ^{N})$ if and only if $\mu
_{n}^{i}\rightharpoonup \mu ^{i}$ for all $i=1,\dots ,N$, it is natural to
endow $ca_{Z}^{+}=\big(ca_{Z}^{+}(\mathbb{R})\big)^{N}$ with the product
topology. This choice allows for easy extensions of the analytical results
to the multivariate case, since compactness and metrizability are preserved
for a finite product of topological spaces. In fact $ca_{Z}^{+}(\mathbb{R})$
endowed with $\sigma \left( ca_{Z}^{+}(\mathbb{R}),C_{Z}(\mathbb{R})\right) $
is metrizable, as stated in the proof of \cite[Lemma~$\mathbf{A.4}$]%
{Che:Kup:Tan}: since $\Omega $ is a separable metric space, this follows by
results in \cite[Section~$\mathbf{8.3}$]{bogachev2007measure}. Given a
convex increasing functional $\Psi :C_{Z}\rightarrow \mathbb{R}\cup
\{+\infty \}$, its convex conjugate function $\Psi ^{\ast
}:ca_{Z}\rightarrow \mathbb{R}\cup \{+\infty \}$ is defined in 
\eqref{convex
conjugate}. 

We first seek for a dual representation formula for general increasing
convex functionals on $C_{Z}$.

\begin{thm}[{Multivariate version of {\protect\cite[Theorem~$\mathbf{A.1}$]%
{Che:Kup:Tan}}}]
\label{dual:repr:thm} Let $\Psi :C_{Z}\rightarrow \mathbb{R}^{N}$ be an
increasing convex functional with the property that for every $X\in C_{Z}$
there exists a constant $\varepsilon >0$ such that 
\begin{equation}
\lim_{z\rightarrow +\infty }{\Psi \big(X+\varepsilon (Z-z)^{+}\boldsymbol{1}%
\big)=\Psi (X)}.  \label{dual:repr:thm:ip}
\end{equation}%
Then 
\begin{equation}
\Psi (X)=\max_{\mu \in ca_{Z}^{+}}\left\{ \left\langle X,\mu \right\rangle
-\Psi^{\ast }(\mu )\right\} .  \label{dual:repr}
\end{equation}
\end{thm}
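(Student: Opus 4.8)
The plan is to reduce the statement to its one–dimensional version \cite[Theorem~A.1]{Che:Kup:Tan}. The cleanest device is to collapse the $N$ coordinates into a single, larger base space: let $\widehat\Omega:=\Omega\times\{1,\dots,N\}$ be the topological disjoint union of $N$ copies of $\Omega$ — again a separable metric space — and set $\widehat Z(\omega,i):=Z(\omega)$, which is continuous on $\widehat\Omega$ with compact sublevel sets $\{\widehat Z\le z\}=\bigsqcup_{i=1}^{N}\{Z\le z\}$. The correspondence $X=(X^{1},\dots,X^{N})\leftrightarrow\widehat X$, $\widehat X(\omega,i):=X^{i}(\omega)$, is a linear order isomorphism of Stone vector lattices $C_Z\cong C_{\widehat Z}(\widehat\Omega)$; it identifies $ca_Z$ with $ca_{\widehat Z}(\widehat\Omega)$ together with the duality pairing, and it carries $\varepsilon(Z-z)^{+}\boldsymbol{1}$ to $\varepsilon(\widehat Z-z)^{+}$. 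Hence $\Psi$, transported to $C_{\widehat Z}(\widehat\Omega)$, is increasing and convex, property \eqref{dual:repr:thm:ip} becomes verbatim the hypothesis of \cite[Theorem~A.1]{Che:Kup:Tan}, and \eqref{dual:repr} is that theorem's conclusion read back through the identification. Equivalently — and this is the route prepared by the discussion preceding the statement — one re-runs the proof of \cite[Theorem~A.1]{Che:Kup:Tan} with the dual pair $\big(C_Z(\R),ca_Z^{+}(\R)\big)$ replaced by its $N$-fold product, using that a finite product of metrizable spaces is metrizable and that a finite product of compact sets is compact.

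For completeness I would also outline the skeleton of the argument invoked. The inequality $\Psi(X)\ge\sup_{\mu\in ca_Z^{+}}\{\langle X,\mu\rangle-\Psi^{\ast}(\mu)\}$ is immediate from the definition of $\Psi^{\ast}$, and since $\Psi$ is increasing one has $\Psi^{\ast}(\mu)=+\infty$ for every $\mu\in ca_Z\setminus ca_Z^{+}$ (test $\Psi^{\ast}$ at $X-th$ with $t\to+\infty$, where $h\in C_Z$, $h\ge0$, is chosen so that $\langle h,\mu\rangle<0$), so only $\mu\in ca_Z^{+}$ contribute. There remain two substantive points, both handled as in \cite{Che:Kup:Tan}: that $\Psi$ coincides with its biconjugate on $C_Z$ — this rests on the Stone-lattice structure of $C_Z$, a Daniell–Stone type representation of the relevant functionals by genuine countably additive measures, and the truncation hypothesis \eqref{dual:repr:thm:ip}, which is precisely what prevents mass from escaping to infinity; and that for every $c\in\R$ the sublevel set $\{\mu\in ca_Z^{+}:\Psi^{\ast}(\mu)\le c\}$ is $\sigma(ca_Z^{+},C_Z)$-compact. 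For the latter one tests $\Psi^{\ast}(\mu)\ge\langle Y,\mu\rangle-\Psi(Y)$ with $Y=kZ\boldsymbol{1}$ to bound $\sum_{i=1}^{N}\int_\Omega Z\,d\mu^{i}$ uniformly over the sublevel set, and with $Y=\varepsilon(Z-z)^{+}\boldsymbol{1}$, letting $z\to+\infty$ through \eqref{dual:repr:thm:ip} at $X=0$, to control the tails; together with the coercivity of $Z$ and Prokhorov's theorem applied in each of the $N$ coordinates this yields relative compactness, while closedness is clear because $\Psi^{\ast}$ is a supremum of $\sigma(ca_Z^{+},C_Z)$-continuous affine functionals. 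This is the multivariate analogue of \cite[Lemma~A.4]{Che:Kup:Tan}.

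Granting these, attainment of the maximum in \eqref{dual:repr} is routine: for fixed $X\in C_Z$ the estimate $\langle X,\mu\rangle\le\tfrac12\big(\Psi^{\ast}(\mu)+\Psi(2X)\big)$, again from $\Psi^{\ast}(\mu)\ge\langle 2X,\mu\rangle-\Psi(2X)$, confines every near-maximiser of $\mu\mapsto\langle X,\mu\rangle-\Psi^{\ast}(\mu)$ to a fixed sublevel set $\{\Psi^{\ast}\le c\}$, which is compact; on it the map is upper semicontinuous, hence attains its maximum, and by the biconjugate identity that value is $\Psi(X)$. I do not expect a genuine obstacle here: the only things to verify with care are that the identification $C_Z\cong C_{\widehat Z}(\widehat\Omega)$ is exact and that $\widehat Z$ is coercive on $\widehat\Omega$. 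Should one prefer to reproduce the proof of \cite[Theorem~A.1]{Che:Kup:Tan} line by line rather than go through $\widehat\Omega$, the sole delicate point is the compactness lemma above, where the finitely many measure coordinates must be treated simultaneously — a step that is harmless precisely because $N$ is finite, so metrizability and compactness survive the finite product.
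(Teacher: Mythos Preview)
Your primary device --- collapsing the $N$ coordinates into the single base space $\widehat\Omega=\Omega\times\{1,\dots,N\}$ with $\widehat Z(\omega,i)=Z(\omega)$ and invoking the univariate theorem verbatim --- is correct and more economical than what the paper does. The paper instead re-runs the proof of \cite[Theorem~A.1]{Che:Kup:Tan} inside the multivariate pair $(C_Z,ca_Z^+)$: it applies Hahn--Banach to the sublinear $\Psi_X(Y):=\Psi(X+Y)-\Psi(X)$ to obtain an increasing linear $\zeta_X\le\Psi_X$, splits it into coordinate functionals $\zeta_X^i(\varphi):=\zeta_X(\varphi^{[i]})$, shows via the Dini-type argument that $\Psi_X(\eta X_n)\downarrow 0$ whenever $X_n\downarrow 0$ (hence each $\zeta_X^i$ is continuous from above at $0$), and then applies Daniell--Stone componentwise to produce $\mu_X=(\mu_X^1,\dots,\mu_X^N)$, which is \emph{directly} the maximiser. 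Your reduction buys brevity and avoids all multivariate bookkeeping; the paper's route makes the coordinate structure explicit.

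One caution about the ``skeleton'' you append: attainment of the maximum in \cite[Theorem~A.1]{Che:Kup:Tan} (and in the paper) does \emph{not} go via compactness of the sublevel sets $\{\Psi^*\le c\}$ followed by upper semicontinuity --- it falls out of the Hahn--Banach/Daniell--Stone construction, since the representing measure $\mu_X$ satisfies $\Psi^*(\mu_X)=\langle X,\mu_X\rangle-\Psi(X)$ by design. This matters because the compactness you sketch actually requires the stronger condition~\eqref{zero:cont:ip} (tail control for \emph{every} $n\in\N$, as in the paper's Lemma~\ref{phi*:sublev:comp}, the analogue of \cite[Lemma~A.3]{Che:Kup:Tan}), not merely the hypothesis~\eqref{dual:repr:thm:ip} at $X=0$ for \emph{some} $\varepsilon>0$: with a single $\varepsilon$ your tightness estimate bounds $\langle(Z-z)^+\boldsymbol{1},\mu\rangle$ by a constant that does not tend to $0$ as $z\to\infty$, so Prokhorov does not apply. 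None of this affects your main argument, which correctly delegates the entire proof --- attainment included --- to the univariate theorem on $\widehat\Omega$.
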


\begin{proof}
Fix $X\in C_{Z}$. By the definition of $\Psi^{\ast }$ it is obvious that 
\begin{equation}
\Psi (X)\geq \sup_{\mu \in ca_{Z}^{+}}\left\{ \left\langle X,\mu
\right\rangle -\Psi^{\ast }(\mu )\right\} .  \label{dual:ineq}
\end{equation}

We extend the approach of \cite[Theorem~$\mathbf{A.1}$]{Che:Kup:Tan} to this
setting. The Hahn-Banach extension theorem applied to the null functional on
the trivial subspace $\{0\}\subseteq C_{Z}$ implies the existence of an
increasing linear functional $\zeta _{X}:C_{Z}\rightarrow \mathbb{R}$
dominated by $\Psi _{X}(Y):=\Psi (X+Y)-\Psi (X).$ Moreover, one can define $%
N $ increasing linear functionals $\zeta _{X}^{i}:C_{Z}(\mathbb{R}%
)\rightarrow \mathbb{R}$ as follows: For $\varphi \in C_{Z}(\mathbb{R})$ and 
$\varphi ^{\lbrack i]}:=(0,\dots ,0,\varphi ,0,\dots ,0)\in C_{Z}$,
i.\thinspace e.\ $\varphi ^{\lbrack i]}$ is the $N$-dimensional functional
with the $i$-th component equals to $\varphi $ and all the others are zero,
we set%
\begin{equation}
\zeta _{X}^{i}(\varphi ):=\zeta _{X}(\varphi ^{\lbrack i]}).
\label{def:zeta:chi:i}
\end{equation}%
Hence, since $\zeta _{X}$ is linear, for all $Y=(Y^{1},\dots ,Y^{N})\in
C_{Z} $ we have $\zeta _{X}(Y)=\sum_{i=1}^{N}{\zeta _{X}^{i}(Y^{i})}$.
It will now be sufficient to prove that, for $\left( X_{n}\right) _{n}$ in $%
C_{Z}$ satisfying $X_{n}\downarrow 0$, there exists a constant $\eta >0$
such that $\Psi _{X}(\eta X_{n})\downarrow 0$. This would imply that $\zeta
_{X}(X_{n})\downarrow 0$, as $\zeta _{X}$ is linear and dominated by $\Psi
_{X}$. So, given $(\varphi _{n})_{n}$ in $C_{Z}(\mathbb{R})$ satisfying $%
\varphi _{n}\downarrow 0$, by (\ref{def:zeta:chi:i}) it follows that $\zeta
_{X}^{i}(\varphi _{n})=\zeta _{X}(\varphi _{n}^{[i]})\downarrow 0$ for all $%
i=1,\dots ,N$, since $(\varphi _{n}^{[i]})_{n}$ is a sequence going to $0$
in $C_{Z}$. Hence, as in the univariate case, the Daniell-Stone theorem
implies that there exists $\mu _{X}^{i}\in ca_{Z}^{+}(\mathbb{R})$ such that 
$\zeta _{X}^{i}(\varphi )=\left\langle \varphi ,\mu _{X}^{i}\right\rangle $
for all $\varphi \in C_{Z}(\mathbb{R})$. Defining $\mu _{X}:=(\mu
_{X}^{1},\dots ,\mu _{X}^{N})\in ca_{Z}^{+}$, one has, for all $Y\in C_{Z}$, 
\begin{equation*}
\zeta _{X}(Y)=\sum_{i=1}^{N}{\zeta _{X}^{i}(Y^{i})}=\sum_{i=1}^{N}{%
\left\langle Y^{i},\mu _{X}^{i}\right\rangle }=\left\langle Y,\mu
_{X}\right\rangle .
\end{equation*}

Since $\Psi_{X}(Y) \geq \zeta_{X}(Y)$ for all $Y\in C_{Z}$, by the identity $%
Y=X+Y-X$ it follows that $\left\langle X, \mu_{X}\right\rangle - \Psi(X)
\geq \left\langle X+Y, \mu_{X}\right\rangle - \Psi(X+Y)$. Moreover, as for
any $W\in C_{Z}, Y:= W-X$ is still in $C_{Z}$, it follows that 
\begin{equation*}
\left\langle X, \mu_{X}\right\rangle - \Psi(X) \geq \left\langle W,
\mu_{X}\right\rangle - \Psi(W)
\end{equation*}
for all $W\in C_{Z}$. Hence, $\Psi^{*}(\mu_{X}) = \left\langle X,
\mu_{X}\right\rangle - \Psi(X)$. This implies, along with (\ref{dual:ineq}),
the dual representation in (\ref{dual:repr}) and the maximum is attained by $%
\mu_{X}$. From now on the proof is identical to the one of \cite[Theorem~$\mathbf{A.1}$%
]{Che:Kup:Tan}, we provide it for completeness. Fix $\left(X_{n}\right)_{n}$
in $C_{Z}$ satisfying $X_{n}\downarrow0, \varepsilon>0$ such that (\ref%
{dual:repr:thm:ip}) holds and $m>0$ so that $X_{1}\leq mZ\boldsymbol{1}$.
Set $\eta=\frac{\varepsilon}{4m}$ and $\delta>0$. Assumption \ref%
{dual:repr:thm:ip} implies the existence of a $z>0$ such that $%
\Psi_{X}(\varepsilon(Z-z)^{+}\boldsymbol{1})<\delta$, and the set $\{Z\leq
2z\}$ is compact. It is important to note that each component of $X_{n}$ is
continuous for all $n$. So, there are no difficulties in the application of
Dini's lemma, which implies that 
\begin{equation*}
x_{n}:=(x_{n}^{1}, \dots, x_{n}^{N}), \text{ for } x_{n}^{i}:=\max_{\omega%
\in\left\{Z\leq2z\right\}}{X_{n}^{i}(\omega)},
\end{equation*}
are the elements of a sequence in $\mathbb{R}^{N}$ decreasing to $0$. Also,
for $x\in\mathbb{R}^{N}$, $x\mapsto\Psi_{X}(x)$ is a continuous function
since it is convex from $\mathbb{R}^{N}$ to $\mathbb{R}$, and so there
exists $n_{0}$ such that $\Psi_{X}(2\eta x_{n})\leq\delta$ for all $n\geq
n_{0}$. Now, 
\begin{equation*}
X_{n} \leq X_{n}\mathbbm{1}_{\{Z\leq2z\}} + X_{1}\mathbbm{1}_{\{Z>2z\}} \leq
x_{n}\mathbbm{1}_{\{Z\leq2z\}} + mZ\boldsymbol{1}\mathbbm{1}_{\{Z>2z\}} \leq
x_{n} + 2m(Z-z)^{+}\boldsymbol{1} \footnote{%
It is trivial to verify $mZ\mathbbm{1}_{\{Z>2z\}} \leq 2m(Z-z)^{+}$, since
if $\omega\notin\{Z>2z\}$, then $0\leq2m(Z-z)^{+}$ and, if $%
\omega\in\{Z>2z\} $, then $mZ\leq2m(Z-z) \iff Z>2z$.}
\end{equation*}
implies that $\frac{X_{n}-x_{n}}{2m} \leq (Z-z)^{+}\boldsymbol{1},$
and therefore, since $\Psi_{X}$ is increasing, 
\begin{equation*}
\Psi_{X}(2\eta(X_{n}-x_{n})) = \Psi_{X}\left(\varepsilon\frac{X_{n}-x_{n}}{2m%
}\right)\leq\delta \text{ for all } n.
\end{equation*}
This gives, by the convexity of $\Psi_{X}$ and the fact that $\Psi_{X}(0)=0$%
, that 
\begin{equation*}
\Psi_{X}(\eta X_{n}) \leq \frac{\Psi_{X}(2\eta x_{n}) + \Psi_{X}(2\eta
(X_{n} - x_{n}))}{2} \leq \delta \text{ for all } n\geq n_{0},
\end{equation*}
implying that $\Psi_{X}(\eta X_{n})\downarrow 0$. This completes the proof.
\end{proof}

A sufficient condition for Condition~(\ref{dual:repr:thm:ip}) is the
following: 
\begin{equation}  \label{zero:cont:ip}
\lim_{z\to+\infty}{\Psi\left(n(Z - z)^{+}\boldsymbol{1}\right)} = \Psi(0) 
\text{ for every } n\in\mathbb{N}.
\end{equation}

As stated in \cite[Lemma~$\mathbf{A.2}$]{Che:Kup:Tan}, that is trivially
valid in a multivariate case, Condition~(\ref{zero:cont:ip}) implies
Condition~(\ref{dual:repr:thm:ip}) for an increasing convex functional $%
\Psi:C_{Z}\to\mathbb{R}$. We next extend the dual representation to $U_{Z}$,

\begin{lem}[{Multivariate version of {\protect\cite[Lemma~$\mathbf{A.3}$]%
{Che:Kup:Tan}}}]
\label{phi*:sublev:comp} Let $\Psi:C_{Z}\to\mathbb{R}$ be an increasing
convex functional. The following hold.

\begin{enumerate}

\item\label{item1:LemmaA3} There exists an increasing convex function $\varphi:\mathbb{R}_{+}\to%
\mathbb{R}\cup\{+\infty\}$ satisfying $\lim_{x\to+\infty}\frac{\varphi(x)}{x}
= +\infty$ such that $\Psi^{*}(\mu) \geq \varphi\left( \left\langle Z%
\boldsymbol{1}, \mu\right\rangle\right)$ for all $\mu\in ca_{Z}^{+}.$

\item \label{item2:LemmaA3} If $\Psi$ satisfies (\ref{zero:cont:ip}), the sublevel sets 
$
T_{a}:=\left\{\mu\in ca_{Z}^{+} : \Psi^{*}(\mu) \leq a\right\}$, $a\in\mathbb{R},
$
are $\sigma(ca_{Z}^{+}, C_{Z})$-compact.
\end{enumerate}
\end{lem}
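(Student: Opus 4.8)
\textbf{Proof plan for Lemma~\ref{phi*:sublev:comp}.}

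The plan is to follow the univariate argument of \cite[Lemma~$\mathbf{A.3}$]{Che:Kup:Tan} and leverage the product structure of $(C_Z, ca_Z^+)$ discussed at the start of this section. For item~\ref{item1:LemmaA3}, I would first observe that $\Psi$ is real-valued, increasing and convex on $C_Z$, and that for each fixed $\mu \in ca_Z^+$ the one-variable function $t \mapsto \Psi(t Z\boldsymbol 1)$ is convex and increasing on $\mathbb{R}_+$. The key inequality comes from the definition of the conjugate: for every $t \ge 0$,
\begin{equation*}
\Psi^*(\mu) \ge \langle tZ\boldsymbol 1, \mu\rangle - \Psi(t Z\boldsymbol 1) = t\langle Z\boldsymbol 1, \mu\rangle - \Psi(tZ\boldsymbol 1).
\end{equation*}
Taking the supremum over $t \ge 0$ gives $\Psi^*(\mu) \ge \varphi(\langle Z\boldsymbol 1, \mu\rangle)$ where $\varphi(s) := \sup_{t\ge 0}\{ts - \Psi(tZ\boldsymbol 1)\}$ is the Legendre transform of $t \mapsto \Psi(tZ\boldsymbol 1)$. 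This $\varphi$ is automatically convex and increasing on $\mathbb{R}_+$, and the superlinearity $\lim_{s\to+\infty}\varphi(s)/s = +\infty$ follows because $t\mapsto \Psi(tZ\boldsymbol 1)$, being a real-valued convex function on $\mathbb{R}_+$, has finite slopes, hence its conjugate is superlinear — exactly as in the cited lemma. (One uses here that $Z\boldsymbol 1 \in C_Z$ since $Z$ is continuous and $Z/Z \equiv 1$ is bounded.)

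For item~\ref{item2:LemmaA3}, assume $\Psi$ satisfies \eqref{zero:cont:ip}. I would fix $a \in \mathbb{R}$ and show $T_a = \{\mu \in ca_Z^+ : \Psi^*(\mu) \le a\}$ is $\sigma(ca_Z^+, C_Z)$-compact. First, by item~\ref{item1:LemmaA3}, $\mu \in T_a$ forces $\varphi(\langle Z\boldsymbol 1,\mu\rangle) \le a$, and by superlinearity of $\varphi$ this bounds $\langle Z\boldsymbol 1, \mu\rangle = \sum_i \int Z\, d\mu^i$ by some constant $c_a$; since each term is nonnegative this also bounds each $\int Z\, d\mu^i \le c_a$. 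Thus each one-dimensional marginal $\pi^i(T_a)$ sits inside a set of the form $\{\nu \in ca_Z^+(\mathbb{R}) : \int Z\, d\nu \le c_a\}$, which is $\sigma(ca_Z^+(\mathbb{R}), C_Z(\mathbb{R}))$-compact by the univariate \cite[Lemma~$\mathbf{A.3}$]{Che:Kup:Tan} argument (using that $Z$ has compact sublevel sets, so Prokhorov applies after the standard identification with $ca^+(\mathbb{R})$). By Tychonoff, the product $\prod_i \overline{\pi^i(T_a)}$ is compact in $ca_Z^+ = (ca_Z^+(\mathbb{R}))^N$ with the product topology, which is precisely $\sigma(ca_Z^+, C_Z)$ as noted at the start of the section. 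It then remains to check that $T_a$ is closed in this product topology: since $\Psi$ is, by \eqref{zero:cont:ip} and \cite[Lemma~$\mathbf{A.2}$]{Che:Kup:Tan}, $\sigma$-lower-semicontinuous-compatible, $\Psi^*$ is the supremum of the $\sigma(ca_Z^+, C_Z)$-continuous affine maps $\mu \mapsto \langle X, \mu\rangle - \Psi(X)$, hence $\sigma(ca_Z^+, C_Z)$-lower semicontinuous, so $T_a$ is closed; a closed subset of a compact set is compact.

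The main obstacle is the closedness/lower-semicontinuity step for $\Psi^*$: one must be sure that the affine functionals $\mu \mapsto \langle X, \mu \rangle$ appearing in the definition of $\Psi^*$ are genuinely continuous for the relevant topology, i.e.\ that testing against $X \in C_Z$ is exactly what the product topology $\bigl(\sigma(ca_Z^+(\mathbb{R}), C_Z(\mathbb{R}))\bigr)^N$ does; this is where the product-space remark at the opening of Section~\ref{App:analytical} does the work, reducing everything componentwise to the univariate facts. The superlinearity bookkeeping in passing from $\varphi(\langle Z\boldsymbol 1, \mu\rangle) \le a$ to a uniform bound on each $\int Z\, d\mu^i$ is routine once one notes all integrals are nonnegative. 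Everything else is a verbatim transcription of the univariate proof, component by component, which is why the cited lemma "readily extends to the multivariate case."
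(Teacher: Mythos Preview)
Your plan for item~\ref{item1:LemmaA3} is fine and matches the paper. The gap is in item~\ref{item2:LemmaA3}: you invoke condition~\eqref{zero:cont:ip} only for the closedness of $T_a$, and try to get relative compactness of the projections $\pi^i(T_a)$ from the bound $\int Z\,d\mu^i\le c_a$ alone. That does not work. The set $\{\nu\in ca_Z^+(\mathbb{R}):\int Z\,d\nu\le c_a\}$ is \emph{not} $\sigma(ca_Z^+(\mathbb{R}),C_Z(\mathbb{R}))$-compact in general: under the identification $\tilde\nu=Z\,d\nu$ it becomes the ball $\{\tilde\nu\in ca^+:\tilde\nu(\Omega)\le c_a\}$ in the $\sigma(ca^+,C_b)$ topology, and on a non-compact $\Omega$ this ball is not tight (e.g.\ $\tilde\nu_n=c_a\delta_{\omega_n}$ with $Z(\omega_n)\to\infty$ has no convergent subsequence). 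The univariate Lemma~A.3 in \cite{Che:Kup:Tan} does \emph{not} assert compactness of this ball; it proves compactness of the sublevel sets of $\Psi^*$ precisely by using \eqref{zero:cont:ip} to produce the additional uniform tail estimate (their condition~(A.5)).

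Concretely, what is missing is the step the paper carries out: from $\Psi^*(\mu)\le a$ one has, for every $m,z\ge 0$,
\[
m\,\langle (Z-z)^+\boldsymbol{1},\mu\rangle-\Psi\big(m(Z-z)^+\boldsymbol{1}\big)\le a,
\]
and then \eqref{zero:cont:ip} lets you choose, for each $m$, some $z=z(m)$ with $\Psi(m(Z-z)^+\boldsymbol{1})\le \Psi(0)+1$, giving $\sup_{\mu\in T_a}\langle (Z-z)^+\boldsymbol{1},\mu\rangle\le (a+\Psi(0)+1)/m$. This is the uniform tail control that, together with the mass bound from item~\ref{item1:LemmaA3}, yields tightness of each $\pi^i(T_a)$ after the identification with $ca^+$, and only then does Prokhorov apply. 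Closedness of $T_a$, by contrast, is automatic from the definition of $\Psi^*$ as a supremum of $\sigma(ca_Z^+,C_Z)$-continuous affine maps and does not need \eqref{zero:cont:ip} at all; you have the roles reversed.
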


\begin{proof}
Defining $\varphi(x):=\sup_{y\in\mathbb{R}^{+}}\left\{xy - \Psi(yZ%
\boldsymbol{1})\right\}$, item \ref{item1:LemmaA3} follows as in \cite[Lemma~$\mathbf{A.3%
}$]{Che:Kup:Tan}.
By definition, $\Psi^{*}$ is a $\sigma(ca_{Z}^{+}, C_{Z})$-lower
semicontinuous function. Hence, the sets $T_{a}$ are $\sigma(ca_{Z}^{+},
C_{Z})$-closed. Every $\mu\in T_{a}$ satisfies 
\begin{equation*}
m\left\langle (Z-z)^{+}\boldsymbol{1}, \mu\right\rangle - \Psi\Big(m(Z-z)^{+}%
\boldsymbol{1}\Big) \leq \Psi^{*}(\mu)\leq a ~ \text{for all } m,z\in\mathbb{%
R}^{+}.
\end{equation*}
Hence, by the definition of $\Psi^{*}$ and Assumption~(\ref{zero:cont:ip})
one has that, for every $m\in\mathbb{R}_{+}$, there exists a $z\in\mathbb{R}%
_{+}$ such that 
\begin{equation*}
\left\langle (Z-z)^{+}\boldsymbol{1}, \mu\right\rangle \leq \frac{a+\Psi(0)+1%
}{m}~~\text{for all } \mu\in T_{a}.
\end{equation*}

Let us now consider, for $i\in\{1, \dots, N\}$, the projection map $\pi^{i}:
ca_{Z}^{+} \to ca_{Z}^{+}(\mathbb{R}), \pi^{i}:(\mu^{1}, ..., \mu^{N})
\mapsto \mu^{i}$ 
and the set $\pi^{i}( T_{a}) $. It obviously holds that $\left\langle
(Z-z)^{+}, \mu^{i}\right\rangle \leq \left\langle (Z-z)^{+}\boldsymbol{1},
\mu\right\rangle$, leading to the following inequality (that is Condition~$%
\mathbf{(A.5)}$ in the proof of \cite[Lemma~$\mathbf{A.3}$]{Che:Kup:Tan}): 
\begin{equation*}
\begin{aligned} \lim_{z\to+\infty}\sup_{\mu^{i}\in\pi^{i}(
T_{a})}{\left\langle Z\mathbbm{1}_{\{Z<2z\}}, \mu^{i}\right\rangle} &\leq
\lim_{z\to+\infty}\sup_{\mu\in T_{a}}{\left\langle Z\mathbbm{1}_{\{Z<2z\}},
\mu\right\rangle} \\ &\leq \lim_{z\to+\infty}\sup_{\mu\in
T_{a}}{\left\langle 2(Z-z)^{+}\boldsymbol{1}, \mu\right\rangle} = 0. \\
\end{aligned}
\end{equation*}

From item \ref{item1:LemmaA3}, it follows that $\left\langle Z, \mu^{i}\right\rangle \leq \left\langle Z\boldsymbol{1},
\mu\right\rangle \leq \varphi^{-1}(a) < +\infty$ $\forall\mu\in T_{a}$ (and so $\forall\mu^{i}\in\pi^{i}( T_{a}))$
that is condition $\mathbf{(A.6)}$ in \cite{Che:Kup:Tan} for $\pi^{i}(
T_{a}) $. Identifying $C_{Z}$ with the space of $N$-dimensional vectors of
continuous bounded functions $C_{b}$ by the function $f:X\mapsto\frac{X}{Z}$
and $ca_{Z}^{+}$ with the set of $N$-dimensional vectors of finite Borel
measures $ca^{+}$ by the function $g : \mu\mapsto Zd\mu$, conditions~$%
\mathbf{(A.5)}$ and $\mathbf{(A.6)}$ imply that $\pi^{i}(g(
T_{a}))=h(\pi^{i}( T_{a}))$ is tight, where $h:ca_{Z}^{+}(\mathbb{R})\to
ca^{+}, \mu^{1}\mapsto Zd\mu^{1}$. So one obtains from Prokhorov's theorem
that $h(\pi^{i}( T_{a}))$ is $\sigma\big(ca^{+}(\mathbb{R}), C_{b}(\mathbb{R}%
)\big)$-compact, that is equivalent to $\pi^{i}( T_{a})$ being $\sigma\big(%
ca_{Z}^{+}(\mathbb{R}), C_{Z}(\mathbb{R})\big)$-compact. Clearly, since $%
T_{a}$ is a closed subset of $\pi^{1}( T_{a})\times \dots \times \pi^{N}(
T_{a})$, which is a $\sigma(ca_{Z}^{+}, C_{Z})$-compact set as product of
compact sets, $T_{a}$ is $\sigma(ca_{Z}^{+}, C_{Z})$-compact.
\end{proof}

\begin{thm}[{Multivariate version of {\protect\cite[Theorem~$\mathbf{A.5}$]%
{Che:Kup:Tan}}}]
\label{dual:repr:thm:U} Let $\Psi(X):U_{Z}\to\mathbb{R}$ be an increasing
convex functional satisfying Condition~(\ref{zero:cont:ip}). Then the
following are equivalent:

\begin{enumerate}

\item \label{cond:1:dual:U}$\Psi(X) = \max_{\mu\in
ca_{Z}^{+}}\left\{\left\langle X, \mu\right\rangle - \Psi^{*}(\mu)\right\}$
for all $X\in U_{Z}$

\item \label{cond:2:dual:U} $\Psi(X)\downarrow\Psi(X)$ for all $X\in U_{Z}$
and every sequence $(X_{n})_{n}$ in $C_{Z}$ such that $X_{n}\downarrow X$

\item \label{cond:3:dual:U} $\Psi(X) = \inf_{Y\in C_{Z}, Y\geq X}{\Psi(Y)}$
for all $X\in U_{Z}$

\item \label{cond:4:dual:U} $\Psi^{*}(\mu) = \sup_{X\in
U_{Z}}\left\{\left\langle X, \mu\right\rangle - \Psi(X)\right\}$ for all $%
\mu\in ca_{Z}^{+}$.
\end{enumerate}
\end{thm}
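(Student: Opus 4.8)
The plan is to establish the chain of implications $\ref{cond:1:dual:U}\Rightarrow\ref{cond:4:dual:U}\Rightarrow\ref{cond:3:dual:U}\Rightarrow\ref{cond:2:dual:U}\Rightarrow\ref{cond:1:dual:U}$, mirroring the univariate argument of \cite[Theorem~$\mathbf{A.5}$]{Che:Kup:Tan}, since all the necessary ingredients have now been adapted to the multivariate setting. Throughout, the underlying geometric fact is that $C_Z$ is a Stone vector lattice in duality with $ca_Z^+$, and that by Lemma~\ref{phi*:sublev:comp}\eqref{item2:LemmaA3} the sublevel sets $T_a=\{\mu\in ca_Z^+:\Psi^*(\mu)\le a\}$ are $\sigma(ca_Z^+,C_Z)$-compact, which is what upgrades suprema to maxima and lets us pass limits through $\Psi^*$.

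First I would treat $\ref{cond:1:dual:U}\Rightarrow\ref{cond:4:dual:U}$. Given the representation on $U_Z$, fix $\mu\in ca_Z^+$; the inequality $\Psi^*(\mu)\ge\sup_{X\in U_Z}\{\langle X,\mu\rangle-\Psi(X)\}$ follows from the definition restricted to $U_Z\supseteq C_Z$ being at most the supremum over $C_Z$ — wait, rather one notes $\sup_{X\in U_Z}\{\cdots\}\ge\sup_{X\in C_Z}\{\cdots\}=\Psi^*(\mu)$, while the reverse inequality uses item~\ref{cond:1:dual:U}: for every $X\in U_Z$ one has $\Psi(X)\ge\langle X,\mu\rangle-\Psi^*(\mu)$, hence $\langle X,\mu\rangle-\Psi(X)\le\Psi^*(\mu)$, and taking the supremum over $X\in U_Z$ gives equality. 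For $\ref{cond:4:dual:U}\Rightarrow\ref{cond:3:dual:U}$, set $\widetilde\Psi(X):=\inf_{Y\in C_Z,\,Y\ge X}\Psi(Y)$ for $X\in U_Z$; this is increasing and convex, dominated by $\Psi$, and since every $X\in U_Z$ with $X\le mZ\boldsymbol 1$ satisfies that the $\sigma(ca_Z^+,C_Z)$-supremum defining $\widetilde\Psi^*$ coincides with $\Psi^*$ on $C_Z$-test functions, one checks $\widetilde\Psi^*=\Psi^*$; then applying item~\ref{cond:4:dual:U} to $\Psi$ and the already-available representation for $\widetilde\Psi$ (which is continuous from above along $C_Z$ by construction) forces $\widetilde\Psi=\Psi$ on $U_Z$. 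The implication $\ref{cond:3:dual:U}\Rightarrow\ref{cond:2:dual:U}$ is the Dini-type step: given $X\in U_Z$ and $(X_n)_n$ in $C_Z$ with $X_n\downarrow X$, monotonicity gives $\Psi(X_n)\downarrow\ell\ge\Psi(X)$; any $Y\in C_Z$ with $Y\ge X$ satisfies $Y\ge X_n$ eventually? — not pointwise, but one uses a diagonal/compactness argument on $\{Z\le 2z\}$ exactly as in the proof of Theorem~\ref{dual:repr:thm} (componentwise Dini, plus the tail control $\Psi_X(\varepsilon(Z-z)^+\boldsymbol 1)<\delta$ coming from \eqref{zero:cont:ip} via \cite[Lemma~$\mathbf{A.2}$]{Che:Kup:Tan}) to produce, for each $\delta>0$, some $n$ and some $Y\in C_Z$ with $Y\ge X$ and $\Psi(X_n)\le\Psi(Y)+\delta\le\widetilde\Psi(X)+\delta=\Psi(X)+\delta$, whence $\ell=\Psi(X)$.

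The last implication $\ref{cond:2:dual:U}\Rightarrow\ref{cond:1:dual:U}$ is where the compactness of the sublevel sets does the real work, and I expect this to be the main obstacle. The inequality $\Psi(X)\ge\sup_{\mu}\{\langle X,\mu\rangle-\Psi^*(\mu)\}$ is immediate from the definition of $\Psi^*$. For the reverse and for attainment, fix $X\in U_Z$ and pick $(X_n)_n$ in $C_Z$ with $X_n\downarrow X$; by Theorem~\ref{dual:repr:thm} (whose hypothesis \eqref{dual:repr:thm:ip} holds for each $\Psi|_{C_Z}$-restriction because \eqref{zero:cont:ip} does, via \cite[Lemma~$\mathbf{A.2}$]{Che:Kup:Tan}) there is $\mu_n\in ca_Z^+$ attaining $\Psi(X_n)=\langle X_n,\mu_n\rangle-\Psi^*(\mu_n)$. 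Since $\Psi^*(\mu_n)=\langle X_n,\mu_n\rangle-\Psi(X_n)\le\langle X_1,\mu_n\rangle-\Psi(X_n)$ and $\Psi(X_n)\downarrow\Psi(X)$ by item~\ref{cond:2:dual:U}, one derives a uniform bound $\Psi^*(\mu_n)\le a$ for some $a\in\mathbb R$ by first bounding $\langle Z\boldsymbol 1,\mu_n\rangle$ through Lemma~\ref{phi*:sublev:comp}\eqref{item1:LemmaA3} (the superlinear $\varphi$ forces $\langle Z\boldsymbol 1,\mu_n\rangle$ bounded, hence $\langle X_1,\mu_n\rangle$ bounded). Thus $(\mu_n)_n\subseteq T_a$, which is $\sigma(ca_Z^+,C_Z)$-compact and metrizable; passing to a convergent subsequence $\mu_n\rightharpoonup\bar\mu\in T_a$, lower semicontinuity of $\Psi^*$ gives $\Psi^*(\bar\mu)\le\liminf\Psi^*(\mu_n)$, while $\langle X_n,\mu_n\rangle\to\langle X,\bar\mu\rangle$ — the delicate point, handled by splitting on $\{Z\le 2z\}$ and $\{Z>2z\}$ and using the tail smallness uniformly over $T_a$ exactly as in Lemma~\ref{phi*:sublev:comp}, together with $X_n\downarrow X$ — so that $\Psi(X)=\lim\Psi(X_n)=\lim(\langle X_n,\mu_n\rangle-\Psi^*(\mu_n))\le\langle X,\bar\mu\rangle-\Psi^*(\bar\mu)$, and the maximum is attained at $\bar\mu$. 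Every step reduces, via the product topology on $ca_Z^+=(ca_Z^+(\mathbb R))^N$ and the componentwise identification with $(C_b(\mathbb R),ca^+(\mathbb R))$, to the corresponding univariate estimate in \cite{Che:Kup:Tan}, so no genuinely new phenomenon arises in dimension $N$; the only care needed is bookkeeping the finitely many components simultaneously when invoking Dini and Prokhorov.
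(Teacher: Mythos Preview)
Your proposal is correct and takes essentially the same approach as the paper: the paper's proof is a one-line deferral to the univariate argument of \cite[Theorem~$\mathbf{A.5}$]{Che:Kup:Tan}, relying on the multivariate Theorem~\ref{dual:repr:thm} and Lemma~\ref{phi*:sublev:comp} together with the componentwise product structure of $ca_Z^+$, and your sketch simply unpacks that argument in detail. The only cosmetic remark is that in the step $\ref{cond:2:dual:U}\Rightarrow\ref{cond:1:dual:U}$ the limit $\langle X_n,\mu_n\rangle\to\langle X,\bar\mu\rangle$ is cleaner to obtain by first freezing $X_m$ (using $\langle X_m,\mu_n\rangle\to\langle X_m,\bar\mu\rangle$ for $X_m\in C_Z$) and then sending $m\to\infty$ via monotone convergence, rather than the direct tail-splitting you indicate; but this is a matter of presentation, not substance.
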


\begin{proof}
The proof follows from Theorem \ref{dual:repr:thm} and Lemma \ref%
{phi*:sublev:comp} as in the univariate case and we thus omit that. This is
because the convergence is intended componentwise and the results can be
applied on each coordinate.
\end{proof}

\bibliographystyle{siam}
\bibliography{Bibliography}

\end{document}